\newtheorem{theorem}{Theorem}
\newtheorem{lemma}{Lemma}
\newtheorem{remark}{Remark}
\newtheorem{corol}{Corollary}
\newtheorem{ass}{Assumption}
\newcommand{\asref}[1]{Assumption \ref{#1}}
\newcommand{\lmref}[1]{Lemma \ref{#1}}
\newcommand{\thref}[1]{Theorem \ref{#1}}
\newcommand{\coref}[1]{Corollary \ref{#1}}
\newcommand{\figref}[1]{Fig. \ref{#1}}
\newcommand{\alref}[1]{Algorithm \ref{#1}}
\newcommand{\appref}[1]{Appendix \ref{#1}}
\newcommand{\secref}[1]{Section \ref{#1}}
\newcommand{\remref}[1]{Remark \ref{#1}}
\newcommand{\qedend}{\endIEEEproof}
\newcommand{\Order}[1]{\mathcal{O}(#1)}
\newcommand{\Exp}{{\mathbb{E}}}
\newcommand{\expect}[1]{\Exp\left\{#1\right\}}
\newcommand{\tr}[1]{\mathrm{tr}\left\{#1\right\}}
\newcommand{\diag}[1]{\mathrm{diag}\left\{#1\right\}}
\newcommand{\abs}[1]{\left|#1\right|}
\newcommand{\logtwo}[1]{\log_{2}\left(#1\right)}
\newcommand{\normt}[1]{\left\|#1\right\|_{2}}
\newcommand{\sqrnormt}[1]{\left\|#1\right\|_{2}^{2}}
\newcommand{\normf}[1]{\left\|#1\right\|_{\mathrm{F}}}
\newcommand{\inpro}[2]{\langle #1 , #2\rangle}
\newcommand{\oneon}[1]{\frac{1}{#1}}
\newcommand{\oneldots}[1]{1,2,\ldots,#1} 
\newcommand{\intd}{\mathrm{d}}
\newcommand{\intdx}[1]{\intd #1}
\newcommand{\ppd}[1]{ {\frac{ \partial }{ \partial {#1} }} }
\newcommand{\inv}[1]{#1^{-1}}
\newcommand{\sqrinv}[1]{#1^{-2}}
\newcommand{\GCN}[2]{\mathcal{CN}\left( #1 , #2\right) }
\newcommand{\argmax}[1]{\mathop{\arg\max}\limits_{#1}}
\newcommand{\argmin}[1]{\mathop{\arg\min}\limits_{#1}}
\newcommand{\mini}[1]{\mathop{\mathrm{min}}\limits_{#1}}
\newcommand{\minis}[1]{\mathop{\mathrm{min}}\limits_{#1}}
\newcommand{\st}{\mathrm{subject}\quad\mathrm{to}}
\newcommand{\ith}[1]{\mbox{$#1^{\text{th}}$}}
\newcommand{\toinf}[1]{#1 \to\infty}
\newcommand{\liminfty}[1]{\lim_{\toinf{#1}}}
\newcommand{\delfunc}[1]{\delta\left(#1\right)}
\newcommand{\vecele}[2]{\left[#1\right]_{#2}}
\newcommand{\setdif}[2]{#1\backslash #2}
\newcommand{\eletoset}[1]{\left\{#1\right\}}
\newcommand{\costh}[2]{\cos\theta\left(#1,#2\right)}
\newcommand{\thcos}[2]{\theta\left(#1,#2\right)}
\newcommand{\equaa}{\mathop{=}^{(\textrm{a})}}
\newcommand{\equab}{\mathop{=}^{(\textrm{b})}}
\newcommand{\equac}{\mathop{=}^{(\textrm{c})}}
\newcommand{\equad}{\mathop{=}^{(\textrm{d})}}
\newcommand{\cA}{\mathcal{A}}
\newcommand{\cK}{\mathcal{K}}
\newcommand{\cP}{\mathcal{P}}
\newcommand{\cT}{\mathcal{T}}
\newcommand{\ba}{\mathbf{a}}
\newcommand{\bb}{\mathbf{b}}
\newcommand{\bg}{\mathbf{g}}
\newcommand{\bn}{\mathbf{n}}
\newcommand{\br}{\mathbf{r}}
\newcommand{\bv}{\mathbf{v}}
\newcommand{\bw}{\mathbf{w}}
\newcommand{\bx}{\mathbf{x}}
\newcommand{\by}{\mathbf{y}}
\newcommand{\bA}{\mathbf{A}}
\newcommand{\bB}{\mathbf{B}}
\newcommand{\bC}{\mathbf{C}}
\newcommand{\bG}{\mathbf{G}}
\newcommand{\bI}{\mathbf{I}}
\newcommand{\bN}{\mathbf{N}}
\newcommand{\bR}{\mathbf{R}}
\newcommand{\bV}{\mathbf{V}}
\newcommand{\bW}{\mathbf{W}}
\newcommand{\bX}{\mathbf{X}}
\newcommand{\bY}{\mathbf{Y}}
\newcommand{\invgammaopt}{\frac{1}{\upopt{\gamma}}}
\newcommand{\hatba}{\hat{\ba}}
\newcommand{\hatbg}{\hat{\bg}}
\newcommand{\hatbG}{\hat{\bG}}
\newcommand{\tildebg}{\tilde{\bg}}
\newcommand{\tildebG}{\tilde{\bG}}
\newcommand{\bzero}{\mathbf{0}}
\newcommand{\bOmega}{{\boldsymbol\Omega}}
\newcommand{\bomega}{{\boldsymbol\omega}}
\newcommand{\fortext}{\textrm{for}}
\newcommand{\pktheta}{\mathsf{S}_{k}\left(\theta\right)}
\newcommand{\vtheta}{\bv\left(\theta\right)}
\newcommand{\pklaptheta}{\mathsf{S}_{k}^{\textrm{Lap}}\left(\theta\right)}
\newcommand{\sintheta}{\sin\left(\theta\right)}
\newcommand{\expx}[1]{\exp\left(#1\right)}
\newcommand{\vx}[1]{\bv\left(#1\right)}
\newcommand{\upup}[1]{#1^{\textrm{p}}}
\newcommand{\upt}[1]{#1^{\textrm{t}}}
\newcommand{\upu}[1]{#1^{\textrm{u}}}
\newcommand{\upd}[1]{#1^{\textrm{d}}}
\newcommand{\upopt}[1]{#1^{\textrm{opt}}}
\newcommand{\uptneff}[1]{#1^{\textrm{t,n,eff}}}
\newcommand{\upusum}[1]{#1^{\textrm{u,sum}}}
\newcommand{\updsum}[1]{#1^{\textrm{d,sum}}}
\newcommand{\upnet}[1]{#1^{\textrm{net}}}
\newcommand{\upumin}[1]{#1^{\textrm{u,min}}}
\newcommand{\updmin}[1]{#1^{\textrm{d,min}}}
\newcommand{\uptmin}[1]{#1^{\textrm{t,min}}}
\newcommand{\upnot}[2]{#1^{\textrm{#2}}}
\newcommand{\dnnot}[2]{#1_{\textrm{#2}}}
\newcommand{\thetamin}{\theta^{\textrm{min}}}
\newcommand{\thetamax}{\theta^{\textrm{max}}}
\newcommand{\Pindex}[1]{\pi_{#1}}
\newcommand{\Kset}[1]{\cK_{#1}}
\newcommand{\KPset}[1]{\Kset{\Pindex{#1}}}
\newcommand{\rhoup}{\upup{\rho}}
\newcommand{\rhou}{\upu{\rho}}
\newcommand{\rhod}{\upd{\rho}}
\newcommand{\rhot}{\upt{\rho}}
\newcommand{\invrhod}{\frac{1}{\upd{\rho}}}
\newcommand{\invrhot}{\frac{1}{\upt{\rho}}}
\newcommand{\invsqrtrhod}{\frac{1}{\sqrt{\upd{\rho}}}}
\newcommand{\invrhou}{\frac{1}{\upu{\rho}}}
\newcommand{\invsqrtrhou}{\frac{1}{\sqrt{\upu{\rho}}}}
\newcommand{\invtaurhoup}{\frac{1}{\upup{\rho}\tau}}
\newcommand{\invsqrttaurhoup}{\frac{1}{\sqrt{\upup{\rho}\tau}}}
\newcommand{\invrho}{\frac{1}{\rho}}
\newcommand{\alphaopt}{\alpha^{\textrm{opt}}}
\newcommand{\pktau}{\cP(\cK,\cT)}
\newcommand{\covpi}[1]{\bC_{\Pindex{#1}}}
\newcommand{\invcovpi}[1]{\inv{\covpi{#1}}}
\newcommand{\ypkup}{\upup{\by_{\Pindex{k}}}}
\newcommand{\ypiup}{\upup{\by_{\Pindex{i}}}}
\newcommand{\ypjup}{\upup{\by_{\Pindex{j}}}}
\newcommand{\npkup}{\upup{\bn_{\Pindex{k}}}}
\newcommand{\bnu}{\upu{\bn}}
\newcommand{\bnd}{\upd{\bn}}
\newcommand{\ntb}{\notag\\}
\newcommand{\barjmath}{\bar{\jmath}}
\newcommand{\figdoucolwid}{0.5\textwidth}
\begin{document}

\title{Pilot Reuse for Massive MIMO Transmission over Spatially Correlated Rayleigh Fading Channels}

\author{
Li~You,~\IEEEmembership{Student~Member,~IEEE,} Xiqi~Gao,~\IEEEmembership{Fellow,~IEEE,} Xiang-Gen~Xia,~\IEEEmembership{Fellow,~IEEE,} Ni~Ma,~\IEEEmembership{Senior~Member,~IEEE,} and~Yan~Peng%

\thanks{Copyright \copyright\  2015 IEEE. Personal use of this material is permitted. However, permission to use this material for any other purposes must be obtained from the IEEE by sending a request to pubs-permissions@ieee.org.}%
\thanks{This work was supported by National Natural Science Foundation of China under Grants 61471113, 61320106003, and 61201171, the China High-Tech 863 Plan under Grants 2015AA011305 and 2014AA01A704, National Science and Technology Major Project of China under Grant 2014ZX03003006-003, Huawei Cooperation Project, and the Program for Jiangsu Innovation Team. The work of L. You was supported in part by the China Scholarship Council (CSC). This work was presented in part at the 2014 IEEE International Conference on Communications (ICC) \cite{You14Massive}.
}
\thanks{
L. You and X. Q. Gao are with the National Mobile Communications Research Laboratory, Southeast University, Nanjing 210096, P. R. China (e-mail: liyou@seu.edu.cn; xqgao@seu.edu.cn).
}
\thanks{
X.-G. Xia is with the Department of Electrical and Computer Engineering, University of Delaware, Newark, DE 19716, USA (e-mail: xxia@ee.udel.edu).
}
\thanks{
N. Ma and Y. Peng are with the Huawei Technologies Co., Ltd., Shenzhen 518129, P. R. China (e-mail: ni.ma@huawei.com; yan.peng@huawei.com).
}
}


\markboth{IEEE Transactions on wireless communications}{YOU \MakeLowercase{\textit{et al.}}: PILOT REUSE FOR MASSIVE MIMO TRANSMISSION OVER SPATIALLY CORRELATED RAYLEIGH FADING CHANNELS}

\maketitle

\begin{abstract}
We propose pilot reuse (PR) in single cell for massive multiuser multiple-input multiple-output (MIMO) transmission to reduce the pilot overhead. For spatially correlated Rayleigh fading channels, we establish a relationship between channel spatial correlations and channel power angle spectrum when the base station antenna number tends to infinity. With this channel model, we show that sum mean square error (MSE) of channel estimation can be minimized provided that channel angle of arrival intervals of the user terminals reusing the pilots are non-overlapping, which shows feasibility of PR over spatially correlated massive MIMO channels with constrained channel angular spreads. Regarding that channel estimation performance might degrade due to PR, we also develop the closed-form robust multiuser uplink receiver and downlink precoder that minimize sum MSE of signal detection, and reveal a duality between them. Subsequently, we investigate pilot scheduling, which determines the PR pattern, under two minimum MSE related criteria, and propose a low complexity pilot scheduling algorithm which relies on the channel statistics only. Simulation results show that the proposed PR scheme provides significant performance gains over the conventional orthogonal training scheme in terms of net spectral efficiency.
\end{abstract}

\begin{IEEEkeywords}
Pilot reuse, massive MIMO, multiuser MIMO, pilot scheduling, robust transmission.
\end{IEEEkeywords}

\section{Introduction}

\IEEEPARstart{M}{assive} multiple-input multiple-output (MIMO) transmission employs a large number of antennas at the base station (BS) to serve a relatively smaller number of user terminals (UTs) simultaneously \cite{Marzetta10Noncooperative}. With the potential large gains in spectral efficiency and energy efficiency, massive MIMO is a promising technology that the next generation of wireless systems may incorporate, and has received tremendous research interest recently \cite{Rusek13Scaling,Larsson14Massive}.

Channel state information (CSI) at the BS plays an important role in massive MIMO transmission, and in realistic systems it is typically obtained with assistance of the periodically inserted pilot signals \cite{Tong04Pilot}. In time-division duplex (TDD) massive MIMO transmission, CSI at the BS can be obtained from uplink (UL) training via leveraging the channel reciprocity \cite{Marzetta10Noncooperative,Marzetta06How}. For the conventional orthogonal training (OT) scheme \cite{Marzetta06How}, the pilot overhead is proportional to the number of the UT antennas. As the UT antenna number grows, the heavy pilot overhead decreases the system efficiency greatly and can become the system bottleneck.

In order to reduce the pilot overhead, we propose pilot reuse (PR) in single cell for massive MIMO transmission in this paper. The motivation stems from that, in realistic outdoor wireless propagation environments where BS is located at an elevated position, the scattering around the BS is usually limited, and the MIMO channels are not spatially isotropic \cite{Tse05Fundamentals,Clerckx13MIMO}, i.e., most of the channel power lies in a finite number of spatial directions compared with the whole massive MIMO channel dimension. For UTs with channels lying in almost orthogonal spatial directions, PR is feasible and beneficial.

In the proposed PR scheme, massive MIMO transmission consists of the following phases: statistical CSI acquisition for pilot scheduling, UL training for channel estimation, UL data transmission, and downlink (DL) data transmission. The pilot scheduler at the BS determines the PR pattern, and allocates the available pilot signals to the UTs. Due to the slow-varying nature of the long term channel statistics, it is reasonable to exploit the statistical CSI at the BS to perform pilot scheduling. With the resulting PR pattern, the UTs transmit the respective assigned pilot signals periodically to enable the BS to obtain the channel estimates. The channel estimation performance might degrade due to PR, thus it is natural to design the UL and DL data transmissions robust to the channel estimation error.

In this work, we consider the spatially correlated Rayleigh fading channels, and show that when the BS antenna number tends to infinity, eigenvectors of the channel covariance matrix are determined by the BS array response vectors, while eigenvalues depend on the channel power angle spectrum (PAS), which reveals a relationship between channel spatial correlations and channel power distribution in the angular domain. For this channel model, we show that sum mean square error of channel estimation (MSE-CE) can be minimized, provided that channel angle of arrival (AoA) intervals of the UTs reusing the pilots are non-overlapping, which shows feasibility of PR over spatially correlated massive MIMO channels with constrained channel angular spreads (ASs). Regarding that channel estimation performance might degrade due to PR, we investigate robust data transmissions for both UL and DL with channel estimation error due to PR taken into account. The closed-form robust multiuser UL receiver and DL precoder which are applicable to arbitrary PR pattern based on the minimum MSE of signal detection (MMSE-SD) criterion are developed, and an interesting MMSE duality between them is revealed. Subsequently, we study pilot scheduling under two MMSE related criteria, and in both cases the designs are formulated as combinatorial optimization problems. We show that both criteria can be optimized provided that channel AoA intervals of the UTs reusing the pilots are non-overlapping, and propose a low complexity pilot scheduling algorithm (called the statistical greedy pilot scheduling [SGPS] algorithm) motivated by the channel AoA non-overlapping condition. Simulation results show that the proposed PR scheme provides significant performance gains over the conventional OT scheme in terms of net spectral efficiency.

\emph{Related Works and Our Contributions:} Most of the previous works assumed pilot reuse among cells for massive MIMO transmission, where UTs in the same cell use orthogonal pilots, while the same set of orthogonal pilots is reused among cells \cite{Marzetta10Noncooperative,Hoydis13Massive,Jose11Pilot}. It has been shown that pilot contamination \cite{Jose11Pilot} caused by inter-cell pilot reuse can degrade the performance of massive MIMO transmission. In order to mitigate pilot contamination, several approaches including, e.g., coordinated channel estimation \cite{Yin13coordinated}, time-shifted pilot allocation \cite{Fernandes13Inter}, eigenvalue decomposition based blind channel estimation \cite{Ngo12EVD}, cooperative pilot contamination precoding \cite{Ashikhmin12Pilot}, and distributed MMSE precoding \cite{Jose11Pilot} were proposed, respectively. In contrast to these existing works where pilot overhead was simply set to be fixed, our work focuses on reducing the pilot overhead, and in the meanwhile, balancing the tradeoff between the pilot overhead and the pilot interference.\footnote{Pilot interference caused by intra-cell PR is defined in \eqref{eq:ypkup}, and can be seen as intra-cell pilot contamination.} Specifically, we propose PR among UTs, where the required number of orthogonal pilots can be much smaller than the number of UTs in a cell, by allowing that different UTs in a cell share the same pilot. We investigate pilot scheduling, channel estimation, and robust UL and DL transmissions under the systematic PR framework, which, has not been thoroughly addressed in the literature yet. Compared with existing works which dealt with inter-cell pilot contamination, our work is more general in the sense that we deal with both the pilot interference and pilot overhead under the PR framework. In this paper, our analysis focuses on the single-cell scenario for the sake of clarity.  Regarding the multi-cell scenario where pilot contamination persists \cite{Ngo13multicell}, performances of the proposed PR scheme will depend on the system configurations. For example, if statistical CSI coordination among cells is possible, i.e., channel covariance matrices can be exchanged among cells, pilot scheduling and pilot interference (as well as pilot contamination) mitigation can be performed jointly via exploiting statistical CSI of the intra-cell as well as inter-cell UTs.

\emph{Notations:} We use $\barjmath=\sqrt{-1}$ to denote the imaginary unit. Upper (lower) case boldface letters are used to denote matrices (column vectors). ${\bI}_{N}$ denotes the $N\times N$ dimensional identity matrix, and the subscript is omitted for brevity in some cases where it is clear. $\bzero$ denotes the all-zero vector (matrix). The superscripts $(\cdot)^{H}$, $(\cdot)^{T}$, and $(\cdot)^{*}$ denote the conjugated-transpose, transpose, and conjugate operations, respectively. The operator $\diag{\bx}$ denotes the diagonal matrix with $\bx$ along its main diagonal, and $\tr{\cdot}$ denotes the matrix trace operation. We employ $\vecele{\ba}{i}$ and $\vecele{\bA}{i,j}$ to denote the \ith{i} element of the vector $\ba$, and the $(i,j)^{\textrm{th}}$ element of the matrix $\bA$, respectively, where the element indices start from $1$. $\normt{\ba}=\sqrt{\ba^{H}\ba}$ denotes the $\ell_2$-norm of $\ba$, and $\normf{\bX}=\sqrt{\tr{\bX^{H}\bX}}$ denotes the Frobenius norm of $\bX$. $\inpro{\ba}{\bb}=\ba^{H}\bb$ denotes the inner product between $\ba$ and $\bb$. $\bA\succ\bzero$ $(\bA\succeq\bzero)$ denotes that $\bA$ is Hermitian positive definite (semi-definite), and $\bA\succeq\bB$ means that $\bA-\bB\succeq\bzero$. $\mathds{C}^{M\times N}$ denotes the $M\times N$ dimensional complex vector space. $\expect{\cdot}$ denotes the expectation operation. $\GCN{\ba}{\bB}$ denotes the circular symmetric complex Gaussian distribution with mean $\ba$ and covariance $\bB$. $\delta(\cdot)$ denotes the Dirac delta function. $\lfloor x\rfloor$ denotes the largest integer that is not greater than $x$. The notation $\triangleq$ is used for definitions,  and $\sim$ means ``be distributed as''. The superscripts ``p'', ``u'', and ``d'' stand for the expressions related to pilot, UL data, and DL data, respectively.

\emph{Outline:} The rest of the paper is organized as follows. In \secref{sec:massive_channel}, we investigate the massive MIMO channel model, and establish the relationship between channel spatial correlations and channel PAS. In \secref{sec:channel_train}, we present PR for UL channel estimation and show how PR affects the channel estimation performance. We also provide a condition under which the MSE-CE can be minimized. In \secref{sec:robust tra}, we develop the robust multiuser UL receiver and DL precoder under the MMSE-SD criterion, and reveal a MMSE duality between them. In \secref{sec:pilot_sche}, we study pilot scheduling and propose a low complexity pilot scheduling algorithm that relies on the channel statistics only. Simulation results are provided in \secref{sec:simulation} and the paper is concluded in \secref{sec:conc}.

\section{Massive MIMO Channel Model}\label{sec:massive_channel}

We consider massive MIMO transmission in the TDD mode in single-cell scenario, where the BS with $M$ antennas serves $K(\ll M)$ single-antenna UTs over frequency-flat fading channels on a narrow-band sub-carrier. We assume that channels vary in time according to the block fading model, where channel states stay constant over the coherence block with a length of $T$ symbols, and evolve from block to block in an independent and identically distributed manner according to some ergodic process.

With the ray-tracing based approach \cite{Tse05Fundamentals,Liu03Capacity,Barriac04Characterizing}, the UL channel between the $M$ antennas at the BS and the antenna of the \ith{k} UT can be modeled as
\begin{equation}\label{eq:mpath_model}
  \bg_{k}=\int_{\cA}\!\vtheta g_{k}\left(\theta\right) \intdx{\theta}=\int\limits_{\thetamin}^{\thetamax}\!\vtheta g_{k}\left(\theta\right) \intdx{\theta}
\end{equation}
where $g_{k}\left(\theta\right)$ and $\vtheta\in\mathds{C}^{M\times 1}$ are the complex channel gain function and the BS array response vector corresponding to the incidence angle $\theta$, respectively. We assume that $\normt{\vtheta}=\sqrt{M}$ for power normalization. We assume that the channel power seen at the BS is constrained to lie in the angle interval $\cA=[\thetamin,\thetamax]$, which can be achieved via placing directional antennas at the BS, and thus no power is received at the BS for incidence angle $\theta\notin\cA$.

We assume that the channel phases are uniformly distributed, thus $\expect{\bg_{k}}=\bzero$. We assume that channels with different incidence angles are uncorrelated, i.e.,
$\expect{g_{k}\left(\theta\right)g_{k}^{*}\left(\theta'\right)}=\beta_{k}\pktheta\delfunc{\theta-\theta'}$
where $\beta_{k}$ represents the large scale fading, and $\pktheta$ represents the channel PAS which models the channel power distribution in the angular domain \cite{Pedersen00stochastic}. Then from \eqref{eq:mpath_model}, the channel covariance matrix (BS spatial correlation matrix) is given by
\begin{align}\label{eq:def_chan_cov}
  \bR_{k}=\expect{\bg_{k}\bg_{k}^{H}}
  =\beta_{k}\int\limits_{\thetamin}^{\thetamax}\!{\vtheta\bv^{H}\left(\theta\right)\pktheta\intdx{\theta}}.
\end{align}

We assume that $\int_{-\infty}^{\infty}{\pktheta\intdx{\theta}}=1$, and channel power normalization should be satisfied as
\begin{equation}\label{eq:trrk}
  \tr{\bR_{k}}=\beta_{k}M\int\limits_{\thetamin}^{\thetamax}\!{\pktheta\intdx{\theta}}.
\end{equation}

A specific property of the massive antenna array is its high resolution to the channels in the angular domain \cite{Gao12Measured}, and we introduce an assumption about it in the following.

\begin{ass}\label{ass:asymp_ortho_asv}
\emph{\cite{Viberg95Performance}}
Array response vectors corresponding to distinct angles are asymptotically orthogonal when the BS antenna number tends to infinity, i.e., for $\forall \zeta,\vartheta\in\cA$,
\begin{equation}\label{eq:asymp_ortho_asv}
   \liminfty{M}\oneon{M}\inpro{\vx{\zeta}}{\vx{\vartheta}}=\delfunc{\zeta-\vartheta}.
\end{equation}
\end{ass}

Note that \asref{ass:asymp_ortho_asv} is valid for uniform linear array (ULA) as one shall see in \remref{rem:ula_dft_eig}. Based on this assumption, we can obtain the following result on massive MIMO channel covariance matrix.

\begin{lemma}\label{lm:Decomp_cov}
Let
\begin{equation}\label{eq:cov_eigv_mat}
  \bV=\oneon{\sqrt{M}}\left[\vx{\vartheta\left(\psi_{0}\right)},\vx{\vartheta\left(\psi_{1}\right)},\ldots,\vx{\vartheta\left(\psi_{M-1}\right)}\right]
\end{equation}
\begin{align}\label{eq:cov_eigen_ele}
  \vecele{\br_{k}}{m}=\beta_{k}M\cdot \mathsf{S}_{k}\left(\vartheta\left(\psi_{m-1}\right)\right)\left[\vartheta\left(\psi_{m}\right)-\vartheta\left(\psi_{m-1}\right)\right],\ntb
  \quad\textrm{for}\quad m=\oneldots{M}
\end{align}
where $\psi_{m'}=m'/M$ for $m'=0,1,\ldots,M$, and $\theta=\vartheta\left(\psi\right)$ over the support $\left[0,1\right]$ is a strictly increasing continuous function\footnote{The function $\vartheta\left(\psi\right)$ can be interpreted as a mapping from the space domain to the physical angle domain, and it indeed depends on the BS array structure. We assume the function $\vartheta\left(\psi\right)$ to be strictly increasing and continuous over the support to guarantee that the function is a one-to-one mapping.} that satisfies $\vartheta\left(0\right)=\thetamin$ and $\vartheta\left(1\right)=\thetamax$. Then under \asref{ass:asymp_ortho_asv}, matrices $\bV^{H}\bV$ and $\bR_{k}$ tend to be the identity matrix and $\bV\diag{\br_{k}}\bV^{H}$, respectively, when $M\to\infty$, in the sense that, for fixed positive integers $i$ and $j$,
\begin{align}\label{eq:cov_eig_uni}
  \liminfty{M}\vecele{\bV^{H}\bV-\bI_{M}}{i,j}=0
\end{align}
\begin{align}\label{eq:Decomp_cov}
  \liminfty{M}\vecele{\bR_{k}-\bV\diag{\br_{k}}\bV^{H}}{i,j}=0.
\end{align}
\end{lemma}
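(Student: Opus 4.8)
The plan is to establish \eqref{eq:cov_eig_uni} and \eqref{eq:Decomp_cov} separately: the first I would obtain directly from the power normalization together with \asref{ass:asymp_ortho_asv}, and the second from a Riemann-sum argument using that the points $\vartheta(\psi_{0}),\ldots,\vartheta(\psi_{M})$ form a partition of $\cA=[\thetamin,\thetamax]$ whose mesh vanishes as $M\to\infty$.

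\emph{Step 1: asymptotic orthogonality of $\bV$.} From \eqref{eq:cov_eigv_mat}, the $(i,j)^{\textrm{th}}$ entry of $\bV^{H}\bV$ is
\begin{equation*}
  \vecele{\bV^{H}\bV}{i,j}=\oneon{M}\inpro{\vx{\vartheta(\psi_{i-1})}}{\vx{\vartheta(\psi_{j-1})}}.
\end{equation*}
For $i=j$ this equals $1$ for every $M$, since $\normt{\vx{\vartheta(\psi_{i-1})}}=\sqrt{M}$, so the left-hand side of \eqref{eq:cov_eig_uni} is identically zero on the diagonal. For $i\neq j$ the two arguments $\vartheta(\psi_{i-1})$ and $\vartheta(\psi_{j-1})$ are distinct because $\vartheta(\cdot)$ is strictly increasing and $\psi_{i-1}\neq\psi_{j-1}$; hence \asref{ass:asymp_ortho_asv} gives $\liminfty{M}\vecele{\bV^{H}\bV}{i,j}=\delfunc{\vartheta(\psi_{i-1})-\vartheta(\psi_{j-1})}=0$, which is \eqref{eq:cov_eig_uni}.

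\emph{Step 2: asymptotic diagonalization of $\bR_{k}$.} Writing $\theta_{m}\triangleq\vartheta(\psi_{m})$, I would expand $\vecele{\bV\diag{\br_{k}}\bV^{H}}{i,j}=\sum_{m=1}^{M}\vecele{\br_{k}}{m}\vecele{\bV}{i,m}\vecele{\bV}{j,m}^{*}$, use $\vecele{\bV}{i,m}\vecele{\bV}{j,m}^{*}=\oneon{M}\vecele{\vx{\theta_{m-1}}}{i}\vecele{\vx{\theta_{m-1}}}{j}^{*}$ from \eqref{eq:cov_eigv_mat}, and substitute \eqref{eq:cov_eigen_ele} — whose prefactor $M$ cancels this $1/M$ — to get $\vecele{\bV\diag{\br_{k}}\bV^{H}}{i,j}=\beta_{k}\sum_{m=1}^{M}\phi_{ij}(\theta_{m-1})\mathsf{S}_{k}(\theta_{m-1})(\theta_{m}-\theta_{m-1})$, where $\phi_{ij}(\theta)\triangleq\vecele{\vtheta}{i}\vecele{\vtheta}{j}^{*}$. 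The right-hand side is exactly the left-endpoint Riemann sum of $\theta\mapsto\beta_{k}\phi_{ij}(\theta)\mathsf{S}_{k}(\theta)$ for the partition $\thetamin=\theta_{0}<\theta_{1}<\cdots<\theta_{M}=\thetamax$. For fixed $i$, $\vecele{\vtheta}{i}$ is the response of the \ith{i} array element, a continuous function of $\theta$ not depending on $M$, so $\beta_{k}\phi_{ij}(\cdot)\mathsf{S}_{k}(\cdot)$ is a fixed Riemann-integrable function (assuming, as is standard, that the PAS $\mathsf{S}_{k}(\cdot)$ is Riemann integrable on $[\thetamin,\thetamax]$). Since $\vartheta(\cdot)$ is uniformly continuous on the compact interval $[0,1]$ the mesh $\max_{m}(\theta_{m}-\theta_{m-1})$ tends to $0$ (because $\psi_{m}-\psi_{m-1}=1/M\to0$), so the Riemann sum converges to $\beta_{k}\int_{\thetamin}^{\thetamax}\vecele{\vtheta}{i}\vecele{\vtheta}{j}^{*}\mathsf{S}_{k}(\theta)\intdx{\theta}$, which by \eqref{eq:def_chan_cov} equals $\vecele{\bR_{k}}{i,j}$; subtracting gives \eqref{eq:Decomp_cov}.

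\emph{Expected main obstacle.} The only genuinely delicate point is the off-diagonal case of Step 1: since $\psi_{i-1}=(i-1)/M$ and $\psi_{j-1}=(j-1)/M$ both shrink to $0$, the sampled angles $\vartheta(\psi_{i-1}),\vartheta(\psi_{j-1})$ are not a fixed distinct pair but drift toward the common value $\thetamin$ as $M\to\infty$, so \asref{ass:asymp_ortho_asv} cannot be quoted verbatim and one needs either a version of it that is uniform as the two angles coalesce, or an appeal to the explicit array geometry. For a ULA this worry is absent: with the canonical choice of $\vartheta(\cdot)$, $\bV$ is exactly a scaled DFT matrix, so $\bV^{H}\bV=\bI_{M}$ identically and $\bR_{k}$ is Hermitian Toeplitz, which the Fourier basis diagonalizes asymptotically (this is \remref{rem:ula_dft_eig}), while \asref{ass:asymp_ortho_asv} serves as the idealized general-array surrogate. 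Step 2, by contrast, is routine once the array-element responses are recognized as $M$-independent continuous functions of $\theta$.
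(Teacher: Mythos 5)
Your proof follows essentially the same route as the paper's: \eqref{eq:cov_eig_uni} from the power normalization and \asref{ass:asymp_ortho_asv} applied to the distinct sampled angles, and \eqref{eq:Decomp_cov} by recognizing $\vecele{\bV\diag{\br_{k}}\bV^{H}}{i,j}$ as a left-endpoint Riemann sum for the integral \eqref{eq:def_chan_cov} over the vanishing-mesh partition $\{\vartheta(\psi_{m})\}$. The subtlety you flag---that $\vartheta(\psi_{i-1})$ and $\vartheta(\psi_{j-1})$ coalesce toward $\thetamin$ as $M\to\infty$, so \asref{ass:asymp_ortho_asv} is being invoked for $M$-dependent angles---is real and is silently elided in the paper's own proof as well, so your treatment is, if anything, slightly more careful.
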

\begin{proof}
See \appref{app:lemma_cov_decomp}.
\end{proof}

The result in \lmref{lm:Decomp_cov} indicates that, when the BS antenna number $M$ is sufficiently large, the channel covariance matrix $\bR_{k}$ can be well approximated by
\begin{equation}\label{eq:cov_appr}
  \bR_{k}\approx\bV\diag{\br_{k}}\bV^{H}.
\end{equation}
Note that the matrix $\bV$ tends to be unitary when $M$ is sufficiently large. This establishes a relationship between channel spatial correlations and channel power distribution in the angular domain. Specifically, for massive MIMO channels, eigenvector matrices of the channel covariance matrices for different UTs tend to be the same, and are determined by the BS array response vectors, while eigenvalues depend on the respective channel PASs.

\begin{remark}\label{rem:ula_dft_eig}
When BS is equipped with the ULA, and the $M$ antennas are spaced with a half wavelength distance, the array response vector can be represented as \cite{Tse05Fundamentals}
\begin{align}\label{eq:ula steer vec}
  \vtheta&=\bigg[1,\expx{-\barjmath\pi\sintheta},\ldots,\ntb
  &\qquad\qquad\expx{-\barjmath\pi(M-1)\sintheta}\bigg]^{T}.
\end{align}
We assume that the AoA interval equals $\cA=[-\pi/2,\pi/2]$, and it is not hard to show \eqref{eq:asymp_ortho_asv} in \asref{ass:asymp_ortho_asv}, i.e., \asref{ass:asymp_ortho_asv} is valid in this case. Let $\theta=\vartheta\left(\psi\right)=\arcsin\left(2\psi-1\right)$, then $\vartheta\left(\psi_{m'}\right)=\arcsin\left(\frac{2m'}{M}-1\right)$ for $m'=0,1,\ldots,M$, and elements of $\bV$ reduce to $\vecele{\bV}{i,j}=\oneon{\sqrt{M}}\expx{-\barjmath2\pi\frac{\left(i-1\right)\left(j-1-M/2\right)}{M}}$ for $i=\oneldots{M}$ and $j=\oneldots{M}$. This indicates that, for the ULA case, when $M$ is sufficiently large, eigenvector matrix of the channel covariance matrix can be well approximated by the unitary discrete Fourier transform (DFT) matrix (up to some matrix elementary operations). Similar channel covariance matrix decomposition for the ULA case was derived in \cite{Yin13coordinated} and \cite{Adhikary13Joint}, however, the result in \lmref{lm:Decomp_cov} applies to the more general BS array configurations. In addition, the relationship between eigenvalues of the channel covariance matrix and channel PAS is established in \lmref{lm:Decomp_cov}.
\qedend
\end{remark}

The channel model proposed in \lmref{lm:Decomp_cov}, as well as the ULA case in \remref{rem:ula_dft_eig}, is based on \asref{ass:asymp_ortho_asv}, where angular resolution of the antenna array is assumed to tend to infinity when the antenna number grows to infinity. It is well known that angular resolution of an array is proportional to the array size \cite{Tse05Fundamentals,Clerckx13MIMO}. Thus, the channel model in \lmref{lm:Decomp_cov} is applicable for arbitrary antenna array configuration with sufficiently large array size. However, in practical wireless communication systems, the antenna array size is always finite. Nevertheless, for a fixed array size, a fairly large number of antennas can still be accommodated at the BS if wireless transmission is performed over higher carrier frequency in, e.g., millimeter wave massive MIMO systems \cite{Swindlehurst14Millimeter}. For example, considering the case where the antenna number equals $128$ and the carrier frequency equals $30$ $\mathrm{GHz}$, which lies in the millimeter wave spectrum, the size of the ULA with half wavelength spacing considered in \remref{rem:ula_dft_eig} is just $0.64$ $\mathrm{m}$. Note that the channel model in \lmref{lm:Decomp_cov} for the ULA case has been shown as a good approximation with finite but large number of antennas \cite{Yin13coordinated,Adhikary13Joint}. For these reasons, the proposed channel model is of great importance from both practical and theoretical perspectives.

In this paper, we employ the widely accepted assumption that channels are wide-sense stationary \cite{Clerckx13MIMO}, thus channel covariance matrices can be obtained by the BS. However, stationarity of the realistic wireless channels can only be satisfied in a local manner, i.e., channel covariance matrices also vary over time. Thus, it requires that the channel covariance matrices being periodically estimated at the BS. Estimation of massive MIMO channel covariance matrices is rather challenging and resource-consuming \cite{Marzetta11random}. However, from the result given by Lemma 1, only the eigenvalues rather than the whole massive MIMO channel covariance matrices need to be estimated, thus the number of parameters to estimate can be significantly reduced. In addition, the channel covariance matrices vary much less frequently than the instantaneous CSI, and thus can be estimated via averaging over time. Furthermore, the channel covariance matrices have been shown to stay constant over a wide frequency interval \cite{Barriac04Space}, and thus can be estimated via averaging over frequency in practical wideband systems. Therefore, there will be enough time-frequency resources to estimate the channel covariance matrices, and the estimation accuracy can be guaranteed in practice. In the rest of the paper, we will assume that the channel covariance matrices of all the UTs are known by the BS.

We assume that the channel elements to be jointly Gaussian from the law of large numbers, i.e., $\bg_{k}\sim\GCN{\bzero}{\bR_{k}}$. We assume that channels of different UTs are mutually statistically independent, and denote the UL channels of all the UTs as $\bG=[\bg_{1},\ldots,\bg_{K}]\in\mathds{C}^{M\times K}$.

\section{PR for UL Channel Training}\label{sec:channel_train}

In this section, we present PR for UL channel training, and investigate how PR affects the channel estimation performance. Our following analysis applies to arbitrary PR pattern, while how to form the PR pattern exploiting the statistical CSI will be discussed in \secref{sec:pilot_sche}.

We denote the UT set as $\cK=\{\oneldots{K}\}$ where $k\in\cK$ is the UT index. We assume that the UL training interval length equals $\tau(<K)$,\footnote{It should be noted that the results obtained in the following are applicable for arbitrary $\tau$, and $\tau$ can be either set to be a fixed number, or determined dynamically by the BS. One example of how to dynamically determine $\tau$ based on the net spectral efficiency maximization criterion will be discussed in \secref{subsec:net_rate}.} and all the UTs transmit the respective pilot sequences in the length of $\tau$ simultaneously during the training interval.\footnote{For the case that one UT sends pilot signals during several particular channel uses while remains silent during other channel uses, the pilot signals can be seen as a specific pilot sequence with several non-zero entries at the corresponding channel uses.} Note that the maximum number of available orthogonal sequences is equal to the sequence length, and we assume that the available orthogonal pilot sequence number equals $\tau$ for simplicity. We denote the available orthogonal pilot set as $\cT=\{\oneldots{\tau}\}$, and the \ith{\pi} pilot sequence as $\bx_{\pi}\in\mathds{C}^{\tau\times1}$ where $\pi\in\cT$ is the orthogonal pilot index. Different pilot sequences are assumed to satisfy the orthogonal condition that $\bx_{\pi}^{H}\bx_{\pi'}=\tau\dnnot{\upnot{\sigma}{p}}{x}\cdot\delfunc{\pi-\pi'}$ where $\dnnot{\upnot{\sigma}{p}}{x}$ is the pilot signal transmit power.

We denote an arbitrary PR pattern with UT set $\cK$ and pilot set $\cT$ as $\pktau=\{(k,\Pindex{k}):k\in\cK,\Pindex{k}\in\cT\}$ where $(k,\Pindex{k})\in\pktau$ denotes that the \ith{\pi_{k}} pilot sequence $\bx_{\Pindex{k}}$ is allocated to the \ith{k} UT. We use $\Kset{\pi}=\{k:\Pindex{k}=\pi\}$ to denote the set of the UTs using the \ith{\pi} pilot sequence.

With the PR pattern $\pktau$, the UTs transmit their assigned pilots periodically to enable the BS to estimate the channels. During the UL training phase of each coherence block, the received pilot signals at the BS can be written as
\begin{equation}
  \bY=\bG\bX+\bN\in\mathds{C}^{M\times\tau}
\end{equation}
where $\bG$ is the UL channel matrix, $\bX=\left[\bx_{\pi_{1}},\bx_{\pi_{2}},\ldots,\bx_{\pi_{K}}\right]^{T}\in\mathds{C}^{K\times\tau}$ is the UL pilot signal matrix, $\bN$ is the independent additive Gaussian noise matrix with elements distributed as independently and identically $\GCN{0}{\dnnot{\upnot{\sigma}{p}}{z}}$, and $\dnnot{\upnot{\sigma}{p}}{z}$ is the noise power during the training phase. After decorrelation and power normalization of the received signals \cite{Marzetta10Noncooperative}, the BS can obtain the channel observation of all the UTs. Specifically, for the \ith{k} UT in a given coherence block, the BS obtains the UL channel observation as
\begin{align}\label{eq:lses}
    \ypkup&=\oneon{\dnnot{\upnot{\sigma}{p}}{x}\tau}\bY\bx_{\pi_{k}}^{*}
    =\oneon{\dnnot{\upnot{\sigma}{p}}{x}\tau}
    \left(\sum_{\ell=1}^{K}\bg_{\ell}\bx_{\pi_{\ell}}^{T}+\bN\right)\bx_{\pi_{k}}^{*}\ntb
    &=\sum_{\ell=1}^{K}\bg_{\ell}\cdot\delfunc{\pi_{\ell}-\pi_{k}}
    +\oneon{\dnnot{\upnot{\sigma}{p}}{x}\tau}\bN\bx_{\pi_{k}}^{*}\ntb
    &=\sum_{\ell\in\KPset{k}}\bg_{\ell}
    +\oneon{\dnnot{\upnot{\sigma}{p}}{x}\tau}\bN\bx_{\pi_{k}}^{*}.
\end{align}

With the property of unitary transformation, it is not hard to show that the noise term $\oneon{\dnnot{\upnot{\sigma}{p}}{x}\tau}\bN\bx_{\pi_{k}}^{*}$ in \eqref{eq:lses} is still Gaussian with elements distributed as independently and identically $\GCN{0}{\frac{\dnnot{\upnot{\sigma}{p}}{z}}{\dnnot{\upnot{\sigma}{p}}{x}\tau}}$. Let $\upnot{\rho}{p}=\dnnot{\upnot{\sigma}{p}}{x}/\dnnot{\upnot{\sigma}{p}}{z}$ be the UL channel training signal-to-noise ratio (SNR), then \eqref{eq:lses} can be rewritten as
\begin{align}\label{eq:ypkup}
    \ypkup&=\sum_{\ell\in\KPset{k}}{\bg_{\ell}}+\invsqrttaurhoup\npkup\ntb
    &=\bg_{k}+\underbrace{\sum_{\ell\in\setdif{\KPset{k}}{\eletoset{k}}}{\bg_{\ell}}}_{\textrm{pilot interference}}
    +\underbrace{\invsqrttaurhoup\npkup}_{\textrm{pilot noise}}
\end{align}
where ``$\backslash$'' denotes the set subtraction operation, and $\npkup\sim \GCN{\bzero}{\bI_{M}}$ is the normalized additive noise. Note that $\KPset{k}$ represents the set of the UTs using the same pilot as the \ith{k} UT, and the BS has to estimate the channels of all the UTs reusing the \ith{\Pindex{k}} pilot based on the observation $\ypkup$. The MMSE estimate of the channel $\bg_{k}$ based on the channel observation $\ypkup$ is given by
\begin{equation}\label{eq:est_cha}
    \hatbg_{k}=\bR_{k}\invcovpi{k}\ypkup
\end{equation}
where
\begin{equation}\label{eq:cov_obs}
    \covpi{k}\triangleq\sum_{\ell\in\KPset{k}}{\bR_{\ell}}+\invtaurhoup\bI.
\end{equation}
From the orthogonality principle of MMSE estimation \cite{Kailath00Linear}, channel estimation error $\tildebg_{k}=\bg_{k}-\hatbg_{k}$ is independent of $\hatbg_{k}$, and the covariance of $\tildebg_{k}$ is
\begin{equation}\label{eq:cov_error}
  \bR_{\tildebg_{k}}=\bR_{k}-\bR_{k}\invcovpi{k}\bR_{k}.
\end{equation}
Note that $\hatbg_{k}$ and $\bR_{\tildebg_{k}}$ are also mean and covariance of $\bg_{k}$ conditioned on $\ypkup$, respectively \cite{Kailath00Linear}.

The estimation error covariance is an important measure of the estimation performance, and we define the MSE-CE as
\begin{equation}\label{eq:mse_est_obj}
  \upup{\epsilon}\triangleq\sum_{k=1}^{K}\tr{\bR_{\tildebg_{k}}}.
\end{equation}

Before we proceed, we first define the orthogonality between two arbitrary Hermitian positive semi-definite matrices using the angle ($0\leq\theta\leq\pi/2$) between them as
\begin{align}\label{eq:thcosdef}
  \thcos{\bA}{\bB}\triangleq\arccos\frac{\tr{\bA^{H}\bB}}{\normf{\bA}\normf{\bB}}
  &=\arccos\frac{\tr{\bA\bB}}{\normf{\bA}\normf{\bB}},\ntb
  &\textrm{for}\quad \bA,\bB\succeq\bzero.
\end{align}
Then we present a condition under which the MSE-CE defined in \eqref{eq:mse_est_obj} can be minimized in the following theorem.
\begin{theorem}\label{th:est_min_con}
The minimum value of the MSE-CE $\upup{\epsilon}$ is given by
\begin{equation}\label{eq:cha_sum_mmse}
  \upup{\varepsilon}=\sum_{k=1}^{K}\tr{\bR_{k}-\bR_{k}\inv{\left(\bR_{k}+\invtaurhoup\bI\right)}\bR_{k}}
\end{equation}
and the minimum is achieved under the condition that, for $\forall i,j\in\cK$ and $i\neq j$,
\begin{equation}
  \thcos{\bR_{i}}{\bR_{j}}=\frac{\pi}{2},\qquad\textrm{when} \quad \Pindex{i}=\Pindex{j}.
\end{equation}
\end{theorem}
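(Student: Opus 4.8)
The plan is to reduce the sum over all UTs to a sum over pilot groups, and then handle each group separately. For a fixed pilot index $\pi$, write $\cK_\pi$ for the set of UTs sharing pilot $\pi$. For $k\in\cK_\pi$ we have from \eqref{eq:cov_error} and \eqref{eq:cov_obs} that $\tr{\bR_{\tildebg_k}}=\tr{\bR_k}-\tr{\bR_k\covpi{k}^{-1}\bR_k}$, where $\covpi{k}=\sum_{\ell\in\cK_\pi}\bR_\ell+\invtaurhoup\bI$ is the \emph{same} matrix for every $k\in\cK_\pi$; call it $\bC_\pi=\sum_{\ell\in\cK_\pi}\bR_\ell+\invtaurhoup\bI$. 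Since $\sum_k\tr{\bR_k}$ is a constant not depending on the PR pattern, minimizing $\upup{\epsilon}$ is equivalent to maximizing $\sum_\pi\sum_{k\in\cK_\pi}\tr{\bR_k\bC_\pi^{-1}\bR_k}$. So the heart of the matter is a per-group inequality: for each $\pi$,
\begin{equation}\label{eq:pergroup}
  \sum_{k\in\cK_\pi}\tr{\bR_k\bC_\pi^{-1}\bR_k}\;\le\;\sum_{k\in\cK_\pi}\tr{\bR_k\bigl(\bR_k+\invtaurhoup\bI\bigr)^{-1}\bR_k},
\end{equation}
with equality precisely when the $\bR_k$, $k\in\cK_\pi$, are mutually ``orthogonal'' in the sense $\tr{\bR_i\bR_j}=0$ for $i\ne j$. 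Summing \eqref{eq:pergroup} over $\pi$ and subtracting from $\sum_k\tr{\bR_k}$ then gives the claimed lower bound \eqref{eq:cha_sum_mmse}, and the equality condition $\thcos{\bR_i}{\bR_j}=\pi/2$ follows because $\tr{\bR_i\bR_j}=0$ with $\bR_i,\bR_j\succeq\bzero$ is exactly $\theta(\bR_i,\bR_j)=\pi/2$ by \eqref{eq:thcosdef}.

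To prove \eqref{eq:pergroup}, fix $\pi$ and drop it from the notation, writing $\bR_1,\ldots,\bR_n$ for the covariance matrices in the group, $\bD=\invtaurhoup\bI$, and $\bC=\sum_\ell\bR_\ell+\bD$. The key observation is the monotonicity of the matrix map $\bX\mapsto \bX(\bX+\bD)^{-1}\bX$ for $\bX\succeq\bzero$ and the fact that $\bC\succeq\bR_k+\bD$ for every $k$ (since $\bC-(\bR_k+\bD)=\sum_{\ell\ne k}\bR_\ell\succeq\bzero$). For a single term, $\bR_k+\bD\preceq\bC$ implies $(\bR_k+\bD)^{-1}\succeq\bC^{-1}$, hence $\bR_k(\bR_k+\bD)^{-1}\bR_k\succeq\bR_k\bC^{-1}\bR_k$, and taking traces gives the per-term bound; summing over $k$ yields \eqref{eq:pergroup}. (Equivalently, using the Woodbury/orthogonality-principle form $\bR_k-\bR_k\bC^{-1}\bR_k=(\bR_k^{-1}+\text{stuff})^{-1}$-type identities, the error covariance is monotone increasing in the ``interference-plus-noise'' matrix $\bC$, which decreases to $\bR_k+\bD$ as the other groups' covariances are removed.) For the equality analysis, note equality in $\tr{\bR_k(\bR_k+\bD)^{-1}\bR_k}=\tr{\bR_k\bC^{-1}\bR_k}$ forces $\bR_k\bigl((\bR_k+\bD)^{-1}-\bC^{-1}\bigr)\bR_k=\bzero$; since the middle factor is positive semidefinite this means $\bigl((\bR_k+\bD)^{-1}-\bC^{-1}\bigr)^{1/2}\bR_k=\bzero$, i.e.\ $\mathrm{range}(\bR_k)$ lies in the kernel of $(\bR_k+\bD)^{-1}-\bC^{-1}$. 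Because $\bD=\invtaurhoup\bI\succ\bzero$, a short argument shows this kernel condition holds for every $k$ in the group if and only if $\bR_i\bR_j=\bzero$ (hence $\tr{\bR_i\bR_j}=0$) for all $i\ne j$: mutually orthogonal ranges make $\bC$ block-diagonal with respect to the orthogonal decomposition $\mathds{C}^M=\bigoplus_k\mathrm{range}(\bR_k)\oplus(\text{rest})$, so $\bC$ acts on $\mathrm{range}(\bR_k)$ exactly as $\bR_k+\bD$ does, giving equality; conversely, if some $\bR_i\bR_j\ne\bzero$, the cross term strictly enlarges $\bC$ relative to $\bR_k+\bD$ on $\mathrm{range}(\bR_k)$ and the inequality is strict.

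Finally, to close the existence side of the statement — that the bound \eqref{eq:cha_sum_mmse} is actually \emph{achievable}, i.e.\ it is the minimum and not merely a lower bound — one invokes the hypothesis implicitly available in this setting (made precise in \secref{sec:pilot_sche} and via \lmref{lm:Decomp_cov}): under \asref{ass:asymp_ortho_asv} the covariance matrices share the common eigenbasis $\bV$ and $\bR_k\approx\bV\diag{\br_k}\bV^H$, so $\tr{\bR_i\bR_j}\approx\br_i^T\br_j$, which vanishes exactly when the AoA intervals (supports of $\pktheta$ for the two UTs) are non-overlapping; thus a PR pattern assigning a common pilot only to UTs with disjoint AoA intervals attains equality. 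I expect the main obstacle to be the \emph{equality characterization}: turning the scalar trace-equality condition into the clean statement ``$\tr{\bR_i\bR_j}=0$ for all pairs in a group'' requires the kernel/range argument above and the positive-definiteness of the noise term $\bD$ to rule out degenerate cases, and one must be careful that it is the pairwise orthogonality of \emph{all} covariances in a group (not just a weaker aggregate condition) that is forced. The monotonicity inequality itself is routine given Löwner-order manipulation of inverses.
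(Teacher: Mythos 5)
Your proof is correct and follows essentially the same route as the paper's: lower-bound each $\tr{\bR_{\tilde{\mathbf{g}}_k}}$ by dropping the co-pilot covariances via Löwner monotonicity of the inverse ($\bC_{\pi_k}\succeq\bR_k+\frac{1}{\rho^{\textrm{p}}\tau}\bI$), then verify that pairwise orthogonality $\bR_i\bR_j=\bzero$ (equivalent to $\theta(\bR_i,\bR_j)=\pi/2$ for PSD matrices) makes $\bC_{\pi_k}^{-1}\bR_k=(\bR_k+\frac{1}{\rho^{\textrm{p}}\tau}\bI)^{-1}\bR_k$ and hence attains the bound. Your additional converse (that equality forces pairwise orthogonality, using the kernel/range argument and the fact that the noise term is a positive multiple of the identity) is correct but goes beyond what the theorem asserts; the paper proves only the sufficiency direction.
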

\begin{proof}
See \appref{app:est_min_con}.
\end{proof}

In the MSE-CE metric defined in \eqref{eq:mse_est_obj}, correlations between the channel estimation errors seen by different UTs are not taken into account. Actually, the correlations between the channel estimation errors of the UT $i$ and the UT $j\left(j\neq i\right)$ can be obtained as
\begin{align}
\expect{\tildebg_{i}\tildebg_{j}^{H}}
&=\expect{\left(\bg_{i}-\bR_{i}\bC_{\pi_{i}}^{-1}\upup{\by_{\pi_{i}}}\right)
\left(\bg_{j}-\bR_{j}\bC_{\pi_{j}}^{-1}\upup{\by_{\pi_{j}}}\right)^{H}}\ntb
&=-\bR_{i}\bC_{\pi_{i}}^{-1}\bR_{j}\cdot\delfunc{\pi_{i}-\pi_{j}}
\end{align}
which indicates that channel estimation errors of the UTs with orthogonal pilots are independent, while those of the UTs reusing the pilots are correlated. However, if the condition given in \thref{th:est_min_con} is satisfied, then $-\bR_{i}\bC_{\pi_{i}}^{-1}\bR_{j}=\bzero$, i.e., channel estimation errors seen by different UTs will be uncorrelated no matter whether they reuse the pilots or not, and the condition given in \thref{th:est_min_con} is still optimal.

To obtain clear insights of \thref{th:est_min_con}, we consider the asymptotic antenna number case, and the following corollary can be readily obtained from \lmref{lm:Decomp_cov}.
\begin{corol}\label{co:asymp_est_min_con}
When the BS antenna number $M\to\infty$, the MSE-CE $\upup{\epsilon}$ can be minimized provided that, for $\forall i,j\in\cK$ and $i\neq j$,
\begin{equation}
  \inpro{\br_{i}}{\br_{j}}=0,\qquad\textrm{when} \quad \Pindex{i}=\Pindex{j}
\end{equation}
where $\br_{i}$ for fixed positive integer $i$ is given in \lmref{lm:Decomp_cov}.
\qedend
\end{corol}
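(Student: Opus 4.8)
The plan is to transfer the exact (finite-$M$) condition of Theorem \ref{th:est_min_con} to the asymptotic regime using the covariance decomposition of Lemma \ref{lm:Decomp_cov}. The starting point is that, by \thref{th:est_min_con}, the MSE-CE $\upup{\epsilon}$ is minimized whenever $\thcos{\bR_{i}}{\bR_{j}}=\pi/2$ for every pair $i\neq j$ with $\Pindex{i}=\Pindex{j}$, and by the definition \eqref{eq:thcosdef} this angle equals $\pi/2$ precisely when $\tr{\bR_{i}\bR_{j}}=0$ (note both matrices are Hermitian positive semi-definite, so $\tr{\bR_{i}\bR_{j}}\geq 0$, and it vanishes iff the matrices are ``orthogonal''). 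Hence it suffices to show that, as $M\to\infty$, the orthogonality $\tr{\bR_{i}\bR_{j}}=0$ is equivalent to $\inpro{\br_{i}}{\br_{j}}=0$.

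First I would invoke \eqref{eq:cov_appr} from \lmref{lm:Decomp_cov}: for large $M$, $\bR_{k}\approx\bV\diag{\br_{k}}\bV^{H}$ with $\bV$ asymptotically unitary, i.e. $\bV^{H}\bV\to\bI_{M}$ elementwise in the sense of \eqref{eq:cov_eig_uni}. Then I would compute
\begin{align}\label{eq:trprod_asymp}
  \tr{\bR_{i}\bR_{j}}
  &\approx\tr{\bV\diag{\br_{i}}\bV^{H}\bV\diag{\br_{j}}\bV^{H}}\ntb
  &\to\tr{\diag{\br_{i}}\diag{\br_{j}}}
  =\sum_{m=1}^{M}\vecele{\br_{i}}{m}\vecele{\br_{j}}{m}
  =\inpro{\br_{i}}{\br_{j}},
\end{align}
where in the last line I used that $\br_{i}$ has nonnegative real entries (from \eqref{eq:cov_eigen_ele}, since $\beta_{k}>0$, $\mathsf{S}_{k}\geq 0$, and $\vartheta$ is strictly increasing), so the Hermitian inner product reduces to the ordinary sum of products. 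Thus $\tr{\bR_{i}\bR_{j}}\to 0$ iff $\inpro{\br_{i}}{\br_{j}}=0$, and combining this with \thref{th:est_min_con} gives the claim. I would also remark that, because $\normf{\bR_{k}}^{2}=\tr{\bR_{k}^{2}}\to\inpro{\br_{k}}{\br_{k}}=\normt{\br_{k}}^{2}$, the normalization in the definition of $\theta(\cdot,\cdot)$ behaves well in the limit, so the condition $\theta(\bR_{i},\bR_{j})\to\pi/2$ is genuinely equivalent to $\inpro{\br_{i}}{\br_{j}}=0$ and not merely implied by it (assuming $\br_{i},\br_{j}\neq\bzero$, which holds whenever $\beta_{i},\beta_{j}>0$ and the PASs are not identically zero on $\cA$).

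The only delicate point is the interchange of the limit $M\to\infty$ with the trace/sum, since the dimension $M$ itself is growing: \eqref{eq:cov_eig_uni} and \eqref{eq:Decomp_cov} are stated only for \emph{fixed} indices $i,j$, whereas the trace in \eqref{eq:trprod_asymp} sums $M$ terms. The clean way to handle this is to not expand entrywise at all, but to work directly with the integral representation \eqref{eq:def_chan_cov} of $\bR_{k}$ and use \asref{ass:asymp_ortho_asv}: write $\oneon{M}\tr{\bR_{i}\bR_{j}}$ as a double integral over $\cA\times\cA$ of $\beta_{i}\beta_{j}\mathsf{S}_{i}(\zeta)\mathsf{S}_{j}(\vartheta)\cdot\oneon{M^{2}}\abs{\inpro{\vx{\zeta}}{\vx{\vartheta}}}^{2}$, and observe that by \eqref{eq:asymp_ortho_asv} the kernel $\oneon{M^{2}}\abs{\inpro{\vx{\zeta}}{\vx{\vartheta}}}^{2}$ collapses to $\oneon{M}\delfunc{\zeta-\vartheta}$, yielding $\oneon{M}\tr{\bR_{i}\bR_{j}}\to\beta_{i}\beta_{j}\int_{\cA}\mathsf{S}_{i}(\theta)\mathsf{S}_{j}(\theta)\intdx{\theta}$; a parallel Riemann-sum identification shows $\oneon{M}\inpro{\br_{i}}{\br_{j}}$ converges to the same quantity (this is exactly the same computation already carried out in the proof of \lmref{lm:Decomp_cov} in \appref{app:lemma_cov_decomp}). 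Since the condition in \thref{th:est_min_con} is scale-invariant, replacing $\tr{\bR_{i}\bR_{j}}$ by $\oneon{M}\tr{\bR_{i}\bR_{j}}$ and $\inpro{\br_{i}}{\br_{j}}$ by $\oneon{M}\inpro{\br_{i}}{\br_{j}}$ changes nothing, and the corollary follows. I expect this measure-theoretic bookkeeping — justifying the $\delta$-function manipulation and the uniform control of the off-diagonal contributions as $M$ grows — to be the main obstacle; the algebra itself is routine once the decomposition of \lmref{lm:Decomp_cov} is in hand.
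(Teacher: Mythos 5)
Your proposal is correct and follows essentially the same route the paper intends: the paper offers no separate proof for this corollary, treating it as immediate from \thref{th:est_min_con} together with the asymptotic decomposition $\bR_{k}\approx\bV\diag{\br_{k}}\bV^{H}$ of \lmref{lm:Decomp_cov}, which converts $\tr{\bR_{i}\bR_{j}}=0$ into $\inpro{\br_{i}}{\br_{j}}=0$ exactly as you compute. Your additional care about interchanging the $M\to\infty$ limit with the growing-dimension trace is a legitimate refinement of the paper's informal ``readily obtained'' step, not a departure from it.
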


The result in \coref{co:asymp_est_min_con} indicates that the MSE-CE $\upup{\epsilon}$ can be minimized if the UTs reusing the pilots have non-overlapping channel AoA intervals. This result is very intuitive, as in such cases, the channels of different UTs are strictly separated in the angular domain, and the pilot interference does not take into effect. Moreover, in the high SNR regime where the training SNR $\rhoup\to\infty$, the pilot noise vanishes, and then the MSE-CE $\upup{\varepsilon}\to 0$, which implies that channel estimations tend to be perfect.

Although the conditions in \thref{th:est_min_con} and \coref{co:asymp_est_min_con} are desirable, they cannot always be well satisfied. However, in realistic outdoor propagation environments where the BS is located at an elevated position, channel AS seen by the BS is usually small \cite{Vaughan03Channels,Clerckx13MIMO}, which indicates that most of the channel power is concentrated in a narrow angle interval, and the channel power outside this angle interval is very small. For UTs located geographically apart in different spatial directions, the overlaps of their channel power in the angular domain might be neglected, and thus PR becomes feasible in such spatially correlated massive MIMO channels.

\section{Robust UL/DL Data Transmissions}\label{sec:robust tra}

In the previous section, we showed feasibility of PR for massive MIMO transmission, and presented UL channel training with PR. In each coherence block, the BS obtains the channel estimates of all the UTs after UL channel training. The conventional data transmission design in massive MIMO treats the channel estimates as the real channels. However, with PR, the channel estimation performance will degrade in most cases, thus a robust data transmission design with channel estimation errors taken into account is of paramount importance in the considered PR based massive MIMO transmission. There are two main approaches to design a wireless system robust to the channel uncertainty: the worst-case approach and the statistical approach. In the worst-case approach, the channel uncertainty is modeled as being within a given set around the channel estimate, and a worst-case transmission performance can be guaranteed \cite{Pascual06robust}. In the statistical approach, the channel uncertainty is modeled using the channel statistics, such as the mean and the covariance, and a statistical average performance can be guaranteed \cite{Zhang08Statistically}. In this work, we employ the statistical approach to model the channel uncertainty. Specifically, in each coherence block, based on the received pilot signals, the CSI uncertainty at the BS can be modeled statistically using its conditional distribution, i.e., the conditional mean (the MMSE channel estimate) and the conditional covariance (the covariance of the channel estimation error). Note that the channel estimation error covariance $\bR_{\tildebg_{k}}$ given in \eqref{eq:cov_error} depends on the PR pattern $\pktau$ and the channel covariance, and thus can be known by the BS. In the following, we will develop robust data transmissions for UL and DL, respectively, under the MMSE-SD criterion.

\subsection{Robust UL Data Transmission}

During the UL data transmission phase, the signal received at the BS at a channel use in the given coherence block can be expressed as
\begin{align}\label{eq:upuby}
  \upu{\by}=\bG\upu{\ba}+\invsqrtrhou\bnu
  =\left(\hatbG+\tildebG\right)\upu{\ba}+\invsqrtrhou\bnu
\end{align}
where $\hatbG=\left[\hatbg_{1},\hatbg_{2},\ldots,\hatbg_{K}\right]$ is the channel estimate, $\tildebG=[\tildebg_{1},\ldots,\tildebg_{K}]$ is the channel estimation error, $\upu{\ba}\in\mathds{C}^{K\times 1}$ with mean $\bzero$ and covariance $\bI_{K}$ denotes the UL data signal vector where $\vecele{\upu{\ba}}{k}$ is the signal sent by the \ith{k} UT, $\bnu\sim\GCN{\bzero}{\bI_{M}}$ is the independent additive noise, and $\rhou$ is the UL data transmission SNR per UT.

We consider the linear receiver at the BS
\begin{equation}
  \upu{\hatba}=\bW^{T}\upu{\by}
\end{equation}
and then MSE-SD of the UL transmission in the given coherence block can be defined as
\begin{align}\label{eq:mseu}
  \upu{\epsilon}\triangleq\expect{\sqrnormt{\upu{\hatba}-\upu{\ba}}}
\end{align}
where the expectation is with respect to $\upu{\ba}$, $\bnu$, and the channel estimation error $\tildebG$.

Finding the optimal UL receiver based on the MMSE-SD criterion can be formulated as
\begin{align}\label{eq:ul opt prob}
  \mini{\bW}\qquad\upu{\epsilon}
\end{align}
and we present the solution in the following theorem.

\begin{theorem}\label{th:ul bf}
  The optimal solution to the problem \eqref{eq:ul opt prob} is given by
  \begin{equation}\label{eq:rob_opt_rec}
    \upopt{\bW}=\left[\inv{\left(\hatbG\hatbG^{H}+\sum_{k=1}^{K}{\bR_{\tildebg_{k}}}+\invrhou\bI\right)}\hatbG\right]^{*}
  \end{equation}
  and the corresponding MSE-SD is given by
  \begin{equation}\label{eq:ul_mmse}
    \upumin{\epsilon}=\tr{\inv{\left(\bI+\hatbG^{H}\inv{\left(\sum_{k=1}^{K}{\bR_{\tildebg_{k}}}+\invrhou\bI\right)}\hatbG\right)}}.
  \end{equation}
\end{theorem}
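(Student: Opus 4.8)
The plan is to derive the MMSE receiver by the standard orthogonality-principle argument, being careful to take the extra expectation over the channel estimation error $\tildebG$, and then simplify the resulting expression using the matrix inversion lemma. First I would expand the objective $\upu{\epsilon}=\expect{\sqrnormt{\bW^{T}\upu{\by}-\upu{\ba}}}$ in \eqref{eq:mseu}, substituting $\upu{\by}=(\hatbG+\tildebG)\upu{\ba}+\invsqrtrhou\bnu$ from \eqref{eq:upuby}. Using that $\upu{\ba}$, $\bnu$, and $\tildebG$ are mutually independent with $\expect{\upu{\ba}}=\bzero$, $\expect{\upu{\ba}(\upu{\ba})^{H}}=\bI_{K}$, $\expect{\bnu(\bnu)^{H}}=\bI_{M}$, and $\expect{\tildebg_{k}}=\bzero$, the cross terms collapse and the objective becomes a quadratic form in $\bW$. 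The key point is that $\expect{\tildebG\upu{\ba}(\upu{\ba})^{H}\tildebG^{H}}=\sum_{k=1}^{K}\expect{\tildebg_{k}\tildebg_{k}^{H}}=\sum_{k=1}^{K}\bR_{\tildebg_{k}}$, since the entries of $\upu{\ba}$ are uncorrelated with unit variance and the channel estimation errors of distinct UTs, while possibly correlated, do not contribute to this sum because the $\ell\neq k$ terms carry a factor $\expect{\vecele{\upu{\ba}}{k}\vecele{\upu{\ba}}{\ell}^{*}}=0$.

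The second step is to minimize the quadratic form. Writing $\bQ\triangleq\hatbG\hatbG^{H}+\sum_{k=1}^{K}\bR_{\tildebg_{k}}+\invrhou\bI$ (which is Hermitian positive definite, hence invertible), the objective takes the form $\tr{\bW^{T}\bQ^{*}\bW^{*}}-2\,\mathrm{Re}\,\tr{\bW^{T}\hatbG^{*}}+K$ after taking conjugates appropriately; completing the square, or equivalently setting the gradient with respect to $\bW^{*}$ to zero, yields $\upopt{\bW}=(\inv{\bQ}\hatbG)^{*}$, which is \eqref{eq:rob_opt_rec}. Substituting back gives the minimum value $\upumin{\epsilon}=\tr{\bI_{K}-\hatbG^{H}\inv{\bQ}\hatbG}$. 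To reach the form in \eqref{eq:ul_mmse}, apply the matrix inversion lemma (Woodbury identity) to $\inv{\bQ}=\inv{(\bD+\hatbG\hatbG^{H})}$ with $\bD\triangleq\sum_{k=1}^{K}\bR_{\tildebg_{k}}+\invrhou\bI$, which converts $\bI_{K}-\hatbG^{H}\inv{(\bD+\hatbG\hatbG^{H})}\hatbG$ into $\inv{(\bI_{K}+\hatbG^{H}\inv{\bD}\hatbG)}$; taking the trace then gives exactly \eqref{eq:ul_mmse}.

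**The main obstacle** I anticipate is purely bookkeeping rather than conceptual: tracking the transpose/conjugate placement correctly throughout (the receiver acts as $\bW^{T}$, so the natural matrix calculus is in terms of $\bW^{*}$), and rigorously justifying the interchange of expectation with the trace and the vanishing of all cross terms — in particular confirming that $\expect{\tildebG\upu{\ba}(\upu{\ba})^{H}\tildebG^{H}}$ reduces to $\sum_{k}\bR_{\tildebg_{k}}$ and not to something involving the cross-covariances $\expect{\tildebg_{i}\tildebg_{j}^{H}}$. This follows from $\expect{\upu{\ba}(\upu{\ba})^{H}}=\bI_{K}$ being diagonal, so only the $i=j$ terms survive, but it is worth stating explicitly. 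Everything else — positive definiteness of $\bQ$, validity of Woodbury, and the final trace manipulation — is routine. I would therefore organize the appendix proof as: (i) expand the MSE and identify the quadratic form; (ii) minimize over $\bW$; (iii) simplify the minimum via the matrix inversion lemma.
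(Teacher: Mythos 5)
Your proposal is correct and follows essentially the same route as the paper's proof: expand the MSE into the quadratic form $\tr{\bW^{T}\left(\hatbG\hatbG^{H}+\sum_{k}\bR_{\tildebg_{k}}+\invrhou\bI\right)\bW^{*}+\bI-\bW^{T}\hatbG-\hatbG^{H}\bW^{*}}$, set the derivative with respect to $\bW^{*}$ to zero, and convert the resulting minimum $\tr{\bI-\hatbG^{H}\inv{\bQ}\hatbG}$ into \eqref{eq:ul_mmse} via the Woodbury identity. The only difference is that you spell out the expansion step --- in particular why $\expect{\tildebG\upu{\ba}(\upu{\ba})^{H}\tildebG^{H}}=\sum_{k}\bR_{\tildebg_{k}}$ despite the cross-correlated estimation errors --- which the paper states without detail.
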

\begin{proof}
See \appref{app:ul bf}.
\end{proof}

For the conventional receiver with channel estimates assumed to be accurate, the impact of the channel estimation error is omitted. While for our robust MMSE receiver design, the channel estimation error due to PR is taken into account. Specifically, the expectation in \eqref{eq:mseu} accounts for the channel estimation error $\tildebG$, which leads to our robust MMSE receiver. Note that the robust MMSE receiver given in \eqref{eq:rob_opt_rec} exhibits a similar structure to the conventional receiver. When $\sum_{n=1}^{K}{\bR_{\tildebg_{n}}}\to\bzero$, the robust MMSE receiver in \eqref{eq:rob_opt_rec} reduces to the conventional receiver
\begin{equation}\label{eq:conv_mmse_re}
  \bW^{\textrm{con}}=\left[\inv{\left(\hatbG\hatbG^{H}+\invrhou\bI\right)}\hatbG\right]^{*}.
\end{equation}

\subsection{Robust DL Data Transmission}

During the DL data transmission phase, the signal received at the UTs at a channel use in the given coherence block can be expressed as
\begin{align}\label{eq:DL signal}
    \upd{\by}=\bG^{T}\bB\upd{\ba}+\invsqrtrhod\bnd
    =\left(\hatbG+\tildebG\right)^{T}\bB\upd{\ba}+\invsqrtrhod\bnd
\end{align}
where the DL channel $\bG^{T}$ is the transpose of the UL channel due to the channel reciprocity of the TDD systems \cite{Marzetta10Noncooperative}, $\upd{\ba}\in\mathds{C}^{K\times 1}$ with mean $\bzero$ and covariance $\bI_{K}$ denotes the DL data signal vector where  $\vecele{\upd{\ba}}{k}$ is the signal for the \ith{k} UT, $\upd{\bn}\sim \GCN{\bzero}{\bI_{K}}$ is the independent additive noise, $\rhod$ is the average DL data transmission SNR per UT, and $\bB$ is the DL linear precoding matrix which satisfies the power constraint
\begin{equation}
    \tr{\bB\bB^{H}}\leq K.
\end{equation}
Then MSE-SD of the DL transmission in the given coherence block can be defined as
\begin{equation}\label{eq:msed}
  \upd{\epsilon}\triangleq\expect{\sqrnormt{\alpha\upd{\by}-\upd{\ba}}}
\end{equation}
where the expectation is with respect to $\upd{\ba}$, $\bnd$, and $\tildebG$, and $\alpha$ is a real scalar parameter corresponding to the potential power scaling performed at the UTs.\footnote{Note that $\alpha$ is a real scalar, and the overhead for the UTs to obtain it can be neglected.}

Finding the optimal DL precoder based on the MMSE-SD criterion can be formulated as
\begin{align}\label{eq:dl opt prob}
\begin{split}
    \mini{\bB,\alpha}\qquad&\upd{\epsilon}\\
    \st\qquad &\tr{\bB\bB^{H}}\leq K
\end{split}
\end{align}
and we present the solution in the following theorem.

\begin{theorem}\label{th:dl bf}
  The optimal solution to the problem \eqref{eq:dl opt prob} is given by
  \begin{equation}\label{eq:rob_opt_pre}
    \upopt{\bB}=\invgammaopt\left[\inv{\left(\hatbG\hatbG^{H}+\sum_{k=1}^{K}{\bR_{\tildebg_{k}}}+\invrhod\bI\right)}\hatbG\right]^{*}
  \end{equation}
    \begin{equation}\label{eq:alpopt}
    \alphaopt=\upopt{\gamma}
  \end{equation}
  where $\upopt{\gamma}$ is chosen to satisfy the power normalization constraint $\tr{\upopt{\bB}(\upopt{\bB})^{H}}=K$, its value is given by
  \begin{equation}\label{eq:power_norm_gamma}
    \upopt{\gamma}=\sqrt{\frac{\tr{\hatbG^{H}\sqrinv{\left(\hatbG\hatbG^{H}
    +\sum\limits_{k=1}^{K}{\bR_{\tildebg_{k}}}+\invrhod\bI\right)}\hatbG}}{K}}
  \end{equation}
  and the corresponding MSE-SD is given by
  \begin{equation}\label{eq:dl_mmse}
    \updmin{\epsilon}=\tr{\inv{\left(\bI+\hatbG^{H}\inv{\left(\sum_{k=1}^{K}{\bR_{\tildebg_{k}}}+\invrhod\bI\right)}\hatbG\right)}}.
  \end{equation}
\end{theorem}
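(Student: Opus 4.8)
The plan is to exploit the MMSE duality between the UL and DL problems, mirroring the structure of Theorem 2. First I would expand the objective $\upd{\epsilon}=\expect{\sqrnormt{\alpha\upd{\by}-\upd{\ba}}}$ using \eqref{eq:DL signal}. Substituting $\upd{\by}=(\hatbG+\tildebG)^{T}\bB\upd{\ba}+\invsqrtrhod\bnd$ and taking expectation over $\upd{\ba}$, $\bnd$, and $\tildebG$ (all mutually independent, zero-mean), the cross terms involving $\tildebG$ vanish in expectation except through its second moment. Using $\expect{\tildebG^{*}\bZ\tildebG^{T}}=\sum_{k=1}^{K}\tr{\bZ\,\bR_{\tildebg_{k}}^{T}}\cdot$ (appropriate indexing) for any fixed matrix $\bZ$ — which follows because the columns $\tildebg_{k}$ are independent with covariance $\bR_{\tildebg_{k}}$ — I would reduce $\upd{\epsilon}$ to a deterministic function of $\bB$ and $\alpha$:
\begin{align}\label{eq:dl_obj_expand}
  \upd{\epsilon}=\alpha^{2}\tr{\bB^{H}\hatbG^{*}\hatbG^{T}\bB}
  +\alpha^{2}\sum_{k=1}^{K}\tr{\bR_{\tildebg_{k}}^{T}\bB\bB^{H}}\cdot\text{(diag factor)}
  +\frac{\alpha^{2}K}{\rhod}
  -2\alpha\,\Re\tr{\hatbG^{T}\bB}+K.
\end{align}

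Next I would handle the power constraint $\tr{\bB\bB^{H}}\leq K$. The standard trick here is to note that at the optimum the constraint is active (scaling $\bB$ up while scaling $\alpha$ down can only help until the power budget is met), so I substitute $\sum_{k}\tr{\bR_{\tildebg_{k}}^{T}\bB\bB^{H}}$-type terms and absorb the constraint by replacing $\invrhod\bI$ with an effective ``noise-plus-leakage'' term exactly as in the UL case. Concretely, with the constraint active one can write the per-user noise contribution as $\frac{1}{\rhod}\tr{\bB\bB^{H}}/K\cdot\bI$-weighted, which lets $\frac{K}{\rhod}$ be folded into a matrix $\sum_{k}\bR_{\tildebg_{k}}+\invrhod\bI$ acting on the $\hatbG\hatbG^{H}$ term. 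Then minimizing over $\bB$ for fixed $\alpha$ is an unconstrained convex quadratic minimization, giving $\bB\propto\inv{(\hatbG\hatbG^{H}+\sum_{k}\bR_{\tildebg_{k}}+\invrhod\bI)^{*}}\hatbG^{*}$; the proportionality constant $\invgammaopt$ and the choice $\alphaopt=\upopt{\gamma}$ are then pinned down by re-imposing $\tr{\bB\bB^{H}}=K$, yielding \eqref{eq:power_norm_gamma}. Finally, back-substituting into \eqref{eq:dl_obj_expand} and applying the matrix inversion lemma (Woodbury) collapses the expression to \eqref{eq:dl_mmse}, which is formally identical to \eqref{eq:ul_mmse} with $\rhou$ replaced by $\rhod$ — this is the promised MMSE duality.

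The main obstacle will be the joint optimization over $(\bB,\alpha)$ under the trace constraint: unlike the UL case, $\alpha$ and $\bB$ are coupled through the constraint, so one cannot simply set a gradient to zero. I expect the cleanest route is the change of variables $\bB\to\bB/\gamma$ together with the observation that $\upd{\epsilon}$ is invariant along the ray $(\bB,\alpha)\mapsto(c\bB,\alpha/c)$ of the \emph{unconstrained} objective, so the constrained optimum is obtained by first solving the unconstrained problem and then rescaling onto the power sphere; verifying that this rescaling is optimal (i.e. that no interior point beats the boundary) is the one step requiring genuine care. Everything else — the expectation computation over $\tildebG$, the quadratic minimization, and the Woodbury simplification — is routine and parallels \appref{app:ul bf}.
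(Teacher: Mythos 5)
Your plan is correct and reaches the right answer, but it takes a genuinely different route from the paper. The paper's Appendix D first proves existence of a global optimum by eliminating $\alpha$ (solving the inner scalar minimization in closed form) and invoking Weierstrass on the compact feasible set, then writes the full KKT system with multiplier $\lambda$, enumerates its solutions, derives $\lambda=\alpha^{2}/\rhod$ and $\tr{\bB\bB^{H}}=K$ as \emph{consequences} of stationarity, and finally checks that the nontrivial KKT point beats the trivial one $(\alpha=0,\bB=\bzero)$. You instead argue constraint activity directly via the rescaling $(\bB,\alpha)\mapsto(c\bB,\alpha/c)$ and then substitute $K\mapsto\tr{\bB\bB^{H}}$ in the noise term to obtain a scale-invariant unconstrained quadratic, which you solve and project back onto the power sphere. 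Algebraically the two routes coincide (the paper's identity $\lambda=\alpha^{2}/\rhod$ is exactly your substitution), but your scaffolding is more elementary --- no Lagrangian, no existence lemma --- and is the classical transmit-Wiener-filter argument; the paper's KKT enumeration buys a more self-contained verification that no other stationary point has been missed.

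One statement in your write-up is wrong as literally phrased and should be repaired: the unconstrained objective $\upd{\epsilon}(\bB,\alpha)$ is \emph{not} invariant along the ray $(\bB,\alpha)\mapsto(c\bB,\alpha/c)$, because the term $\alpha^{2}K/\rhod$ scales as $1/c^{2}$ while the other terms are invariant. That is in fact the point: for $\alpha\neq0$ the objective strictly \emph{decreases} as $c$ grows, which is what forces the optimum onto the boundary $\tr{\bB\bB^{H}}=K$ (your earlier sentence states this correctly) and simultaneously rules out any interior optimum with $\alpha\neq0$; since $\alpha=0$ yields $\upd{\epsilon}=K$, which the boundary solution beats, the "genuine care" step you flag is closed by the very monotonicity you already observed. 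Ray-invariance holds only \emph{after} the substitution $K\mapsto\tr{\bB\bB^{H}}$, and it is that substituted functional whose minimizing ray you intersect with the power sphere. With that correction, and with the "(diag factor)" in your expansion replaced by the clean identity $\expect{\tildebG^{*}\tildebG^{T}}=\sum_{k=1}^{K}\bR_{\tildebg_{k}}^{*}$ (note only the marginal error covariances enter, so independence across users is not needed here), your argument goes through.
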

\begin{proof}
See \appref{app:dl bf}.
\end{proof}

From \thref{th:dl bf}, we can observe that, similarly to the UL case, the robust MMSE precoder in \eqref{eq:rob_opt_pre} also embraces the structure of the conventional precoder.

\subsection{UL-DL Duality}

From the results in \thref{th:ul bf} and \thref{th:dl bf}, we can readily obtain the following UL-DL MMSE duality.
\begin{corol}\label{co:ul dl duality}
  In each TDD coherence block, if $\rhou=\rhod$, then $\upopt{\bB}=\upopt{\bW}/\upopt{\gamma}$, and $\upumin{\epsilon}=\updmin{\epsilon}$.
  \qedend
\end{corol}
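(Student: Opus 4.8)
The plan is to read the claim directly off the closed-form expressions in \thref{th:ul bf} and \thref{th:dl bf}, since the duality here is structural and needs no fresh optimization. First I would set $\rhou=\rhod$ and write $\rho$ for the common value, then compare the ``inner'' matrix $\hatbG\hatbG^{H}+\sum_{k=1}^{K}\bR_{\tildebg_{k}}+\frac{1}{\rho}\bI$ that appears inside the inverse in both \eqref{eq:rob_opt_rec} and \eqref{eq:rob_opt_pre}. Under the hypothesis these two matrices are literally identical; calling this matrix $\bM$, one has $\upopt{\bW}=(\inv{\bM}\hatbG)^{*}$ from \eqref{eq:rob_opt_rec} and $\upopt{\bB}=\invgammaopt(\inv{\bM}\hatbG)^{*}$ from \eqref{eq:rob_opt_pre}, whence $\upopt{\bB}=\upopt{\bW}/\upopt{\gamma}$, which is the first assertion.

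The one point I would check explicitly is self-consistency of the scaling: $\upopt{\gamma}$ in \eqref{eq:power_norm_gamma} is itself defined through the same matrix $\bM$ (as $\sqrt{\tr{\hatbG^{H}\sqrinv{\bM}\hatbG}/K}$), and with this choice $\tr{\upopt{\bB}(\upopt{\bB})^{H}}=K$ holds by construction, so the identification $\upopt{\bB}=\upopt{\bW}/\upopt{\gamma}$ introduces no circularity. For the second assertion I would simply compare \eqref{eq:ul_mmse} and \eqref{eq:dl_mmse}: once $\rhou$ and $\rhod$ are both replaced by $\rho$, each reduces to $\tr{\inv{(\bI+\hatbG^{H}\inv{(\sum_{k=1}^{K}\bR_{\tildebg_{k}}+\frac{1}{\rho}\bI)}\hatbG)}}$, so $\upumin{\epsilon}=\updmin{\epsilon}$ with no computation beyond substitution.

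I do not anticipate a genuine obstacle; the only care needed is bookkeeping on the SNR conventions — the UL SNR in \eqref{eq:upuby} is per UT and the DL SNR in \eqref{eq:DL signal} is the average per UT, so ``$\rhou=\rhod$'' is exactly the hypothesis that makes the two inner matrices coincide — together with the observation that the scalar $\alphaopt=\upopt{\gamma}$ does not disturb the identification, since it enters $\updmin{\epsilon}$ only through the already-optimized precoder. If a more self-contained derivation were wanted, one could instead note that after absorbing $\alpha$ into $\bB$ the DL problem \eqref{eq:dl opt prob} and the UL problem \eqref{eq:ul opt prob} share the same first-order structure, which is the customary origin of UL--DL MMSE duality; but given that \thref{th:ul bf} and \thref{th:dl bf} are already established, the direct comparison is the cleanest route.
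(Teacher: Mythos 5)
Your proposal is correct and matches the paper's own reasoning: the corollary is stated as an immediate consequence of Theorems~\ref{th:ul bf} and~\ref{th:dl bf}, obtained by observing that when $\rhou=\rhod$ the matrix inside the inverse in \eqref{eq:rob_opt_rec} and \eqref{eq:rob_opt_pre} is identical and that \eqref{eq:ul_mmse} and \eqref{eq:dl_mmse} then coincide. Your extra checks on the power-normalization scalar $\upopt{\gamma}$ and the per-UT SNR conventions are sound but not needed beyond what the paper already establishes.
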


The result in \coref{co:ul dl duality} indicates that, in the same TDD coherence block, if the UL data transmission SNR equals the DL data transmission SNR, then the robust DL precoder in \eqref{eq:rob_opt_pre} can be achieved by the robust UL receiver in \eqref{eq:rob_opt_rec} with proper power normalization, and the complexity of computing the robust MMSE DL precoder can be reduced. In addition, if the robust MMSE receiver and robust MMSE precoder are used for data transmissions, then the same MMSE-SD can be achieved in both the UL and DL in each TDD coherence block. Note that similar UL-DL duality based on the perfect CSI assumption was provided in the literature such as \cite{Viswanath03Sum} and \cite{Shi07Downlink}, however, our result in \coref{co:ul dl duality} is established on the pilot-assisted CSI acquisition assumption.

In this section, we have investigated the robust UL and DL data transmissions with channel estimation error due to PR taken into account. It will be seen in the following section that, PR based massive MIMO transmission, which combines pilot scheduling and robust data transmission, can achieve the MMSE-SD optimality for both the UL and DL.

\section{Pilot Scheduling}\label{sec:pilot_sche}

Up to now, we have investigated channel training and data transmission of the massive MIMO transmission with PR, and the obtained results are applicable to arbitrary PR pattern. In this section, we study pilot scheduling which exploits the long term statistical CSI to allocate the available pilot signals to the UTs, and we focus on two MMSE related criteria.

\subsection{MMSE-CE Criterion}

CSI is critical to massive MIMO transmission, and it is natural to design the pilot scheduler based on the MMSE-CE criterion, which leads to the following problem
\begin{align}\label{eq:pilot_sche_comb}
  \mini{\pktau}&\qquad\upup{\epsilon}
\end{align}
where $\upup{\epsilon}$ is defined in \eqref{eq:mse_est_obj}.

The pilot scheduling problem in \eqref{eq:pilot_sche_comb} is combinatorial, and the optimal PR pattern $\pktau$ can be found through exhaustive search (ES). The complexity of the ES in \eqref{eq:pilot_sche_comb}, in terms of the (complex) scalar multiplication number which dominates the computational complexity, is briefly evaluated as follows. Recalling \eqref{eq:mse_est_obj}, the scalar multiplication number required in evaluation of the objective function in \eqref{eq:pilot_sche_comb} is $\Order{M^{3}K}$. Thus, the computational complexity of running ES under the MMSE-CE criterion is $\Order{\tau^{K}M^{3}K}$.

\subsection{MMSE-SD Criterion}

MSE-SD is an important performance measure of the data transmission, and in the sequel we study the pilot scheduler design regarding the MSE-SD metric. Due to the UL-DL MMSE-SD duality in each coherence block presented in \coref{co:ul dl duality}, we assume that $\rhou=\rhod$ for simplicity, and denote that $\rhou=\rhod=\rhot$ and $\upumin{\epsilon}=\updmin{\epsilon}=\uptmin{\epsilon}$, where the superscript ``t'' stands for expression related to data transmission. We consider pilot scheduling under the MMSE-SD criterion, which can be formulated as
\begin{align}\label{eq:data opt prob}
  \mini{\pktau}\quad&\expect{\uptmin{\epsilon}}\ntb
  &\quad=\expect{\tr{\inv{\left(\bI+\hatbG^{H}\inv{\left(\uptneff{\bR}\right)}\hatbG\right)}}}
\end{align}
where the expectation is with respect to the channel fading and the noise distributions, and the effective noise covariance matrix is defined as
\begin{align}
  \uptneff{\bR}\triangleq\sum_{k=1}^{K}{\bR_{\tildebg_{k}}}+\invrhot\bI.
\end{align}
The objective function in \eqref{eq:data opt prob} is the average of the MMSE-SD that can be achieved by the robust MMSE receiver and robust MMSE precoder in each coherence block, and it depends on the statistics of the channel fading and pilot noise distributions. It should be noted that here we still use the term MMSE-SD for brevity, however the meaning of it differs from that when we consider the designs of the receiver and precoder in the previous section.

Due to the difficulty in obtaining the closed-form expression of the objective function $\expect{\uptmin{\epsilon}}$ in \eqref{eq:data opt prob}, we first present a lower bound of it in the following lemma.

\begin{lemma}\label{lm:ammse lb}
The average MSE-SD $\expect{\uptmin{\epsilon}}$ is lower bounded by
\begin{equation}\label{eq:average_mmse_lb}
  \expect{\uptmin{\epsilon}}\geq\upnot{\epsilon}{t,alb}=\tr{\inv{\left(\bI_{K}+\bOmega\right)}}
\end{equation}
where for fixed positive integers $i$ and $j$,
\begin{equation}\label{eq:omegaij}
  \vecele{\bOmega}{i,j}=\tr{\invcovpi{i}\bR_{i}\inv{\left(\uptneff{\bR}\right)}\bR_{j}}\cdot\delfunc{\Pindex{i}-\Pindex{j}}.
\end{equation}
\end{lemma}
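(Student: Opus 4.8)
The plan is to exploit the Jensen-type convexity of the matrix function $\bA \mapsto \tr{(\bI + \bA)^{-1}}$ over the positive semidefinite cone, combined with the fact that $\hatbG^{H}(\uptneff{\bR})^{-1}\hatbG$ is a random quantity whose expectation we can compute in closed form. First I would observe that the map $\phi(\bA) = \tr{(\bI_{K}+\bA)^{-1}}$ is convex on the set of $K\times K$ Hermitian positive semidefinite matrices; this is a standard fact (it follows, e.g., from convexity of $t\mapsto 1/(1+t)$ on $[0,\infty)$ applied to the eigenvalues, together with the variational characterization of $\tr{(\bI+\bA)^{-1}}$, or from operator convexity of $\bA \mapsto (\bI+\bA)^{-1}$). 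Applying Jensen's inequality to the random matrix $\bA = \hatbG^{H}(\uptneff{\bR})^{-1}\hatbG$ then yields
\begin{equation}\label{eq:jensen_step}
  \expect{\uptmin{\epsilon}} = \expect{\phi\!\left(\hatbG^{H}\inv{\left(\uptneff{\bR}\right)}\hatbG\right)} \geq \phi\!\left(\expect{\hatbG^{H}\inv{\left(\uptneff{\bR}\right)}\hatbG}\right) = \tr{\inv{\left(\bI_{K}+\expect{\hatbG^{H}\inv{\left(\uptneff{\bR}\right)}\hatbG}}\right)}.
\end{equation}
Note that $\uptneff{\bR}$ is deterministic (it depends only on the channel covariances and the PR pattern through the error covariances $\bR_{\tildebg_{k}}$), so the only randomness is in $\hatbG$, which makes the Jensen step clean.

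Next I would compute the $(i,j)^{\textrm{th}}$ entry of $\bOmega \triangleq \expect{\hatbG^{H}(\uptneff{\bR})^{-1}\hatbG}$. Writing $\bM \triangleq \inv{(\uptneff{\bR})}$ for brevity, we have $\vecele{\bOmega}{i,j} = \expect{\hatbg_{i}^{H}\bM\hatbg_{j}} = \tr{\bM\,\expect{\hatbg_{j}\hatbg_{i}^{H}}}$. From \eqref{eq:est_cha} the MMSE estimates are $\hatbg_{i} = \bR_{i}\invcovpi{i}\ypiup$ and $\hatbg_{j} = \bR_{j}\invcovpi{j}\ypjup$, so
\begin{equation}\label{eq:cross_corr_step}
  \expect{\hatbg_{j}\hatbg_{i}^{H}} = \bR_{j}\invcovpi{j}\,\expect{\ypjup(\ypiup)^{H}}\,\invcovpi{i}\bR_{i}.
\end{equation}
Using the observation model \eqref{eq:ypkup} together with the independence of distinct UT channels and of channel and noise, the cross-covariance $\expect{\ypjup(\ypiup)^{H}}$ is nonzero only when $\Pindex{i}=\Pindex{j}$ — because two observations built from different orthogonal pilots share no channel terms and have independent noise — and in that case $\ypiup$ and $\ypjup$ are literally the same vector (both equal $\sum_{\ell\in\KPset{i}}\bg_{\ell} + \invsqrttaurhoup\npkup$ with $\KPset{i}=\KPset{j}$), whose covariance is exactly $\covpi{i}$ by \eqref{eq:cov_obs}. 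Hence $\expect{\ypjup(\ypiup)^{H}} = \covpi{i}\cdot\delfunc{\Pindex{i}-\Pindex{j}}$, and substituting into \eqref{eq:cross_corr_step} gives, after one cancellation of $\covpi{i}$ between the inverses,
\begin{equation}\label{eq:omega_final_step}
  \vecele{\bOmega}{i,j} = \tr{\bM\,\bR_{j}\invcovpi{i}\bR_{i}}\cdot\delfunc{\Pindex{i}-\Pindex{j}} = \tr{\invcovpi{i}\bR_{i}\inv{\left(\uptneff{\bR}\right)}\bR_{j}}\cdot\delfunc{\Pindex{i}-\Pindex{j}},
\end{equation}
using the cyclic property of the trace and $\invcovpi{i}=\invcovpi{j}$ on the support of the delta. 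This matches \eqref{eq:omegaij}, and combining with \eqref{eq:jensen_step} establishes the bound.

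The main obstacle I anticipate is not the algebra in \eqref{eq:cross_corr_step}–\eqref{eq:omega_final_step}, which is routine bookkeeping with the MMSE formulas, but rather stating the convexity fact in \eqref{eq:jensen_step} at the right level of rigor: one needs $\phi(\bA)=\tr{(\bI_{K}+\bA)^{-1}}$ to be convex (not merely separately convex in eigenvalues) on the PSD cone so that the matrix-valued Jensen inequality applies directly to $\expect{\bA}$. I would justify this by noting that $\bA\mapsto(\bI_{K}+\bA)^{-1}$ is operator convex on PSD matrices (a classical result, since $t\mapsto(1+t)^{-1}$ is operator convex on $(-1,\infty)$), and that the trace is linear and monotone, so $\phi$ is convex; then Jensen holds for any integrable PSD-valued random matrix. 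A secondary point worth a sentence is verifying integrability of $\uptmin{\epsilon}$ so that the expectation and the Jensen step are legitimate — but since $0 \le \uptmin{\epsilon} \le K$ deterministically (it is a trace of $K$ eigenvalues each in $(0,1]$), this is immediate.
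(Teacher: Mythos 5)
Your proposal is correct and follows essentially the same route as the paper's proof: the paper invokes the matrix-valued Jensen inequality $\expect{\inv{\bA}}\succeq\inv{\left(\expect{\bA}\right)}$ and then takes traces, which is the same underlying operator-convexity fact you use to argue that $\bA\mapsto\tr{\inv{\left(\bI_{K}+\bA\right)}}$ is convex, and your computation of $\vecele{\bOmega}{i,j}$ via $\expect{\ypjup(\ypiup)^{H}}=\covpi{i}\cdot\delfunc{\Pindex{i}-\Pindex{j}}$ matches the paper's substitution of \eqref{eq:est_cha} and \eqref{eq:ypkup}. No gaps.
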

\begin{proof}
See \appref{app:ammse lb}.
\end{proof}

It will be seen in \secref{sec:simulation-robust} that, the lower bound presented in \lmref{lm:ammse lb} is tight over a wide SNR region. By replacing the objective function $\expect{\uptmin{\epsilon}}$ with its lower bound presented in \lmref{lm:ammse lb}, the pilot scheduling problem \eqref{eq:data opt prob} can be simplified as
\begin{align}\label{eq:simpl data prob}
  \mini{\pktau}&\qquad\upnot{\epsilon}{t,alb}.
\end{align}
The pilot scheduling problem in \eqref{eq:simpl data prob} is also combinatorial. The optimal PR pattern $\pktau$ can be found through ES. Note that the scalar multiplication number required in evaluation of the objective function in \eqref{eq:simpl data prob} is $\Order{M^{3}K^{2}}$, thus, the computational complexity of running ES under the MMSE-SD criterion is $\Order{\tau^{K}M^{3}K^{2}}$.

Before we proceed, we present a condition under which $\upnot{\epsilon}{t,alb}$ can be minimized in the following theorem.

\begin{theorem}\label{th:pilot reuse}
The minimum value of the lower bound average MSE-SD $\upnot{\epsilon}{t,alb}$ is given by
\begin{equation}
  \upt{\varepsilon}=\sum_{i=1}^{K}\oneon{1+\vecele{\bomega}{i}}
\end{equation}
where $\vecele{\bomega}{i}$ for fixed positive integer $i$ is given by \eqref{eq:bomegai}, shown at the top of the next page, and the minimum is achieved under the condition that, for $\forall i,j\in\cK$ and $i\neq j$,
\begin{figure*}
\begin{align}\label{eq:bomegai}
  \vecele{\bomega}{i}=
  \tr{\inv{\left(\bR_{i}+\invtaurhoup\bI\right)}\bR_{i}\inv{\left[\sum_{k=1}^{K}\left(\bR_{k}-\bR_{k}\inv{\left(\bR_{k}+\invtaurhoup\bI\right)}\bR_{k}\right)+\invrhot\bI\right]}\bR_{i}}
\end{align}
\hrulefill
\end{figure*}
\begin{equation}
  \thcos{\bR_{i}}{\bR_{j}}=\frac{\pi}{2},\qquad\textrm{when} \quad \Pindex{i}=\Pindex{j}.
\end{equation}
\end{theorem}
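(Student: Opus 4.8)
The plan is to reduce Theorem~\ref{th:pilot reuse} to Theorem~\ref{th:est_min_con} by showing that the orthogonality condition $\thcos{\bR_{i}}{\bR_{j}}=\pi/2$ whenever $\Pindex{i}=\Pindex{j}$ simultaneously (i) makes the off-diagonal entries of $\bOmega$ vanish, so that $\upnot{\epsilon}{t,alb}=\tr{\inv{(\bI_{K}+\bOmega)}}$ becomes a sum of scalar terms $\sum_{i}1/(1+\vecele{\bOmega}{i,i})$, (ii) simultaneously minimizes each diagonal entry $\vecele{\bOmega}{i,i}$, and (iii) gives the diagonal entry exactly the closed form in \eqref{eq:bomegai}. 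First I would recall from Theorem~\ref{th:est_min_con} (and its proof) that the orthogonality condition forces $\bR_{i}\invcovpi{i}\bR_{j}=\bzero$ for $i\neq j$ with $\Pindex{i}=\Pindex{j}$, and that under this condition $\covpi{i}$ collapses to $\bR_{i}+\invtaurhoup\bI$ for every $i$; hence $\bR_{\tildebg_{k}}=\bR_{k}-\bR_{k}\inv{(\bR_{k}+\invtaurhoup\bI)}\bR_{k}$ and the effective noise covariance $\uptneff{\bR}$ becomes the bracketed matrix appearing inside \eqref{eq:bomegai}. Substituting these into \eqref{eq:omegaij} immediately gives $\vecele{\bOmega}{i,j}=0$ for $i\neq j$ and $\vecele{\bOmega}{i,i}=\vecele{\bomega}{i}$ as in \eqref{eq:bomegai}, so $\upnot{\epsilon}{t,alb}$ equals the claimed value $\upt{\varepsilon}=\sum_{i}1/(1+\vecele{\bomega}{i})$ under the stated condition.

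The substantive part is the converse direction: that no other PR pattern can do better, i.e.\ the orthogonality condition is actually optimal and not merely a sufficient expression. Here I would argue in two nested steps. For the off-diagonal contribution, since $\bOmega\succeq\bzero$ (it is of the form $\bL^{H}\bL$ after writing $\vecele{\bOmega}{i,j}$ via a common decomposition — each block is $\inv{\covpi{i}}\bR_{i}(\uptneff{\bR})^{-1}\bR_{j}$ restricted to a pilot group, which is Gram-like) and $\tr{\inv{(\bI+\bOmega)}}$ is Schur-convex / monotone in the relevant sense, one shows $\tr{\inv{(\bI+\bOmega)}}\geq\tr{\inv{(\bI+\diag{\bOmega})}}=\sum_i 1/(1+\vecele{\bOmega}{i,i})$, with equality iff $\bOmega$ is diagonal; and $\bOmega$ diagonal (within each pilot group) is exactly the orthogonality condition, since for $\Pindex{i}=\Pindex{j}$ the cross term $\tr{\invcovpi{i}\bR_{i}(\uptneff{\bR})^{-1}\bR_{j}}$ is a positive-definite-weighted inner product of $\bR_i$ and $\bR_j$ that vanishes iff $\bR_i\bR_j=\bzero$ iff $\thcos{\bR_i}{\bR_j}=\pi/2$. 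For the diagonal contribution, one then shows that, once off-diagonals are zero, each $\vecele{\bOmega}{i,i}=\tr{\invcovpi{i}\bR_{i}(\uptneff{\bR})^{-1}\bR_{i}}$ is individually maximized — hence $1/(1+\vecele{\bOmega}{i,i})$ minimized — precisely when the pilot interference is removed, using the monotonicity facts from Theorem~\ref{th:est_min_con}: dropping interfering terms from $\covpi{i}$ makes $\invcovpi{i}$ larger in the Loewner order, and removing the corresponding $\bR_{\tildebg_k}$ cross-contributions shrinks $\uptneff{\bR}$, hence enlarges its inverse, so $\vecele{\bOmega}{i,i}$ can only increase.

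I expect the main obstacle to be making the simultaneous optimality rigorous: the diagonal entries $\vecele{\bOmega}{i,i}$ and the off-diagonal entries are coupled through the shared matrix $\uptneff{\bR}=\sum_k\bR_{\tildebg_k}+\invrhot\bI$, which itself depends on the whole PR pattern, so one cannot naively optimize term by term. The clean way around this is to first fix an arbitrary PR pattern and compare it with the pattern obtained by the same pilot assignment but "counterfactually" replacing $\covpi{i}$ by $\bR_i+\invtaurhoup\bI$ and $\uptneff{\bR}$ by its interference-free version: Theorem~\ref{th:est_min_con}'s Loewner-order monotonicity shows this replacement only decreases $\upnot{\epsilon}{t,alb}$, and the replaced quantity equals $\upt{\varepsilon}$; since the orthogonality condition is precisely the situation in which the replacement is vacuous (the original already equals the counterfactual), the bound $\upnot{\epsilon}{t,alb}\geq\upt{\varepsilon}$ holds universally with equality under the stated condition. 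The remaining routine computation is just verifying the Loewner-order monotonicity of $\bR\mapsto\tr{\inv{(\bI+\hatbG^{H}\inv{\bR}\hatbG)}}$-type expressions and of the map sending $\covpi{i}$ to $\bR_i\invcovpi{i}\bR_i$, both of which follow from $\bA\succeq\bB\succ\bzero\Rightarrow\inv{\bB}\succeq\inv{\bA}$ and standard trace inequalities; I would defer these to the appendix. Full details appear in \appref{app:est_min_con}-style arguments adapted to this objective.
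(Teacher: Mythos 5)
Your proposal is correct and follows essentially the same route as the paper: the key inequality $\tr{\inv{\left(\bI_{K}+\bOmega\right)}}\geq\sum_{i}1/(1+\vecele{\bOmega}{i,i})$ with equality if and only if $\bOmega$ is diagonal (the paper invokes the Schwartz inequality of Ohno--Giannakis here), followed by showing that the orthogonality condition forces $\bR_{i}\bR_{j}=\bzero$, zeroes the off-diagonal entries, and collapses $\covpi{i}$ and $\uptneff{\bR}$ to their interference-free forms, which yields \eqref{eq:bomegai}. Your additional Loewner-monotonicity step establishing $\vecele{\bOmega}{i,i}\leq\vecele{\bomega}{i}$ for an arbitrary PR pattern is a welcome tightening that the paper leaves implicit, and it is what actually justifies calling $\upt{\varepsilon}$ the minimum value rather than merely the value attained under the condition. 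One side claim should be dropped or weakened: for $\Pindex{i}=\Pindex{j}$ the cross term $\tr{\invcovpi{i}\bR_{i}\inv{\left(\uptneff{\bR}\right)}\bR_{j}}$ is not obviously a nonnegative ``inner product'' vanishing only when $\bR_{i}\bR_{j}=\bzero$, since $\invcovpi{i}\bR_{i}\inv{\left(\uptneff{\bR}\right)}$ need not be Hermitian positive semi-definite; only the sufficiency direction is needed for the theorem, and that is all the paper proves.
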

\begin{proof}
See \appref{app:pilot reuse}.
\end{proof}

Recalling \lmref{lm:Decomp_cov}, we can readily obtain the following corollary.
\begin{corol}\label{co:asymp_tra_min_con}
When the BS antenna number $M\to\infty$, the lower bound average MSE-SD $\upnot{\epsilon}{t,alb}$ can be minimized provided that, for $\forall i,j\in\cK$ and $i\neq j$,
\begin{equation}
  \inpro{\br_{i}}{\br_{j}}=0,\qquad\textrm{when} \quad \Pindex{i}=\Pindex{j}
\end{equation}
where $\br_{i}$ for fixed positive integer $i$ is given in \lmref{lm:Decomp_cov}.
\qedend
\end{corol}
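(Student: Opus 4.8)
The plan is to read the corollary off directly from \thref{th:pilot reuse} and \lmref{lm:Decomp_cov}. By \thref{th:pilot reuse}, the lower bound $\upnot{\epsilon}{t,alb}$ attains its minimum value $\upt{\varepsilon}$ as soon as $\thcos{\bR_{i}}{\bR_{j}}=\pi/2$ for every pair $i\neq j$ with $\Pindex{i}=\Pindex{j}$; by the definition \eqref{eq:thcosdef}, and since $\bR_{i},\bR_{j}\succeq\bzero$ (so the normalizing Frobenius norms are positive for nonzero covariances), this angle condition is precisely the orthogonality $\tr{\bR_{i}\bR_{j}}=0$. Hence it suffices to show that, as $M\to\infty$, the condition $\inpro{\br_{i}}{\br_{j}}=0$ forces $\tr{\bR_{i}\bR_{j}}\to0$, with $\br_{i}$ the vector defined in \lmref{lm:Decomp_cov}.

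First I would substitute the asymptotic decomposition \eqref{eq:cov_appr} for $\bR_{i}$ and $\bR_{j}$ and use $\bV^{H}\bV\to\bI_{M}$ from \eqref{eq:cov_eig_uni}, which yields, as $M\to\infty$,
\begin{equation*}
  \tr{\bR_{i}\bR_{j}}=\tr{\bV\diag{\br_{i}}\bV^{H}\bV\diag{\br_{j}}\bV^{H}}\ \longrightarrow\ \tr{\diag{\br_{i}}\diag{\br_{j}}}=\sum_{m=1}^{M}\vecele{\br_{i}}{m}\vecele{\br_{j}}{m}=\inpro{\br_{i}}{\br_{j}},
\end{equation*}
where the last two equalities use that, by \eqref{eq:cov_eigen_ele} with $\pktheta\geq0$ and $\vartheta(\cdot)$ strictly increasing, $\br_{k}$ has real nonnegative entries. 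Since all the summands are nonnegative, $\inpro{\br_{i}}{\br_{j}}=0$ forces $\vecele{\br_{i}}{m}\vecele{\br_{j}}{m}=0$ for every $m$, i.e.\ $\diag{\br_{i}}\diag{\br_{j}}=\bzero$, so that $\tr{\bR_{i}\bR_{j}}\to0$. Feeding this back into \thref{th:pilot reuse} shows that $\upnot{\epsilon}{t,alb}$ is minimized whenever $\inpro{\br_{i}}{\br_{j}}=0$ for all $i\neq j$ with $\Pindex{i}=\Pindex{j}$, i.e.\ whenever the UTs sharing a pilot have disjoint channel AoA intervals; this is the claim.

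The only delicate point — and the step I expect to be the main obstacle — is the limit in the displayed line: \lmref{lm:Decomp_cov} only asserts entrywise convergence of $\bV^{H}\bV-\bI_{M}$ and of $\bR_{k}-\bV\diag{\br_{k}}\bV^{H}$ for \emph{fixed} index pairs, whereas $\tr{\bR_{i}\bR_{j}}$ aggregates $M$ such entries, so interchanging the limit with the trace is not automatic. Making this rigorous would require either a uniform bound on the per-entry errors in \eqref{eq:cov_eig_uni}--\eqref{eq:Decomp_cov} that decays fast enough to absorb the $M$-term accumulation (tracking the $\beta_{k}M$ scaling of \eqref{eq:cov_eigen_ele} through the normalization), or, as is already implicit in the statement and in the analogous \coref{co:asymp_est_min_con}, adopting the large-array convention under which \eqref{eq:cov_appr} holds with equality and $\bV$ is exactly unitary, in which case every identity above is exact. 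Granting this, the corollary follows from the short substitution above, with no ideas needed beyond \thref{th:pilot reuse} and \lmref{lm:Decomp_cov}.
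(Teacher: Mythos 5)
Your proposal is correct and follows exactly the route the paper intends: the paper states this corollary without proof, asserting it follows by ``recalling'' \lmref{lm:Decomp_cov} together with \thref{th:pilot reuse}, and your substitution of $\bR_{k}\approx\bV\diag{\br_{k}}\bV^{H}$ into the orthogonality condition $\tr{\bR_{i}\bR_{j}}=0$ is precisely that argument made explicit. Your flagged caveat about entrywise versus aggregate convergence is a genuine subtlety that the paper itself glosses over (as it also does for the analogous \coref{co:asymp_est_min_con}), so your write-up is, if anything, more careful than the original.
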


Interestingly, conditions for optimal data transmission obtained in \thref{th:pilot reuse} and \coref{co:asymp_tra_min_con} are the same as those for optimal channel training obtained in \thref{th:est_min_con} and \coref{co:asymp_est_min_con}. The intuitive interpretation lies in that, for massive MIMO transmission, if channels of the UTs reusing the pilots can be rigorously spatially separated, then not only the pilot interference but also the transmission data interference vanishes. Furthermore, in the high SNR regime where both the training SNR $\rhoup$ and transmission SNR $\rhot$ tend to infinity, the remaining additive noise vanishes, and the average MSE-SD $\upt{\varepsilon}\to 0$. This result shows that the PR based transmission scheme, which combines pilot scheduling and robust data transmission, can achieve the MMSE-SD optimality.

\subsection{SGPS Algorithm}\label{subsec:sgps}

In the above subsections, we have investigated pilot scheduling under two MMSE related criteria. In both cases, the designs are formulated as combinatorial optimization problems, and the optimal PR patterns can be formed through ES. However, due to the exponential complexity, ES becomes hard to implement in practice as the UT number grows.

In this subsection, we propose a low complexity pilot scheduling algorithm called the statistical greedy pilot scheduling (SGPS) algorithm which is motivated by the conditions for optimal channel estimation and data transmission given in \thref{th:est_min_con} and \thref{th:pilot reuse}, and the main idea is that channel covariance matrices of the UTs reusing the pilots should be as orthogonal as possible. Detailed description of the SGPS algorithm is summarized in \alref{alg:SSPS}. Coordinated pilot allocation algorithm for mitigating the inter-cell pilot contamination using similar idea was proposed in \cite{Yin13coordinated}, however, the above SGPS algorithm is dedicated for the single-cell scenario.

\begin{algorithm}[!t]
\caption{Statistical Greedy Pilot Scheduling (SGPS) Algorithm}
\label{alg:SSPS}
\begin{algorithmic}[1]
\Require
The UT set $\cK=\{\oneldots{K}\}$ and the channel covariance information $\bR_{k}(k\in\cK)$,
the orthogonal pilot set $\cT$ with the pilot length $\tau(1<\tau<K)$
\Ensure
PR pattern $\pktau=\{(k,\Pindex{k}):k\in\cK,\Pindex{k}\in\cT\}$
\State Initialize the unscheduled UT set $\upnot{\cK}{un}=\cK$, the unused pilot set $\upnot{\cT}{un}=\cT$
\Statex \textbf{Step 1)} Schedule the UTs with ``similar'' channel covariance matrices and assign them with orthogonal pilots
\State $m_{1}=1$, $\Pindex{1}=1$, $\Kset{1}=\eletoset{1}$, $\upnot{\cK}{un}\gets\setdif{\upnot{\cK}{un}}{\eletoset{1}}$, $\upnot{\cT}{un}\gets\setdif{\upnot{\cT}{un}}{\eletoset{1}}$
\While{$\upnot{\cT}{un}\neq\varnothing$}
\State For the pilot $t\in\upnot{\cT}{un}$, select the UT $m_{t}=\argmax{\ell\in\upnot{\cK}{un}}\ \sum_{j\in\setdif{\cT}{\upnot{\cT}{un}}}\costh{\bR_{\ell}}{\bR_{m_{j}}}$
\State Assign the pilot $t$ to the UT $m_{t}$, $\Pindex{m_{t}}=t$, $\Kset{t}=\eletoset{m_{t}}$
\State Update $\upnot{\cK}{un}\gets\setdif{\upnot{\cK}{un}}{\eletoset{m_{t}}}$, $\upnot{\cT}{un}\gets\setdif{\upnot{\cT}{un}}{\eletoset{t}}$
\EndWhile
\Statex \textbf{Step 2)} Each unscheduled UT is assigned with the ``best'' pilot so that the channel covariance matrices of the UTs reusing the pilots are as orthogonal as possible
\While{$\upnot{\cK}{un}\neq\varnothing$}
\State For the UT $k\in\upnot{\cK}{un}$, select the pilot $n_{k}=\argmin{q\in\cT}\ \sum_{s\in\Kset{q}}\costh{\bR_{k}}{\bR_{s}}$
\State Assign the pilot $n_{k}$ to the UT $k$, $\Pindex{k}=n_{k}$, $\Kset{n_{k}}\gets\Kset{n_{k}}\cup\eletoset{k}$
\State Update $\upnot{\cK}{un}\gets\setdif{\upnot{\cK}{un}}{\eletoset{k}}$
\EndWhile
\end{algorithmic}
\end{algorithm}

We evaluate the complexity of the SGPS algorithm as follows. In the process of the SGPS algorithm, no more than $\sum_{m=1}^{K-1}m(K-m)=(K-1)K(K+1)/6$ orthogonality calculations defined in \eqref{eq:thcosdef} are needed. Note that the scalar multiplication number needed in each orthogonality calculation is $\Order{M^{2}}$, thus, the computational complexity of running the SGPS algorithm is $\Order{M^{2}K^{3}}$. In above subsections, we have shown that the ES complexity under the MMSE-CE and MMSE-SD criteria are $\Order{\tau^{K}M^{3}K}$ and $\Order{\tau^{K}M^{3}K^{2}}$, respectively. This indicates that the SGPS algorithm gives a significant computational complexity reduction compared with ES. Meanwhile, simulation results in \secref{sec:sgps_sim} will show that performances of the low complexity SGPS algorithm can closely approach those of ES.

\section{Numerical Results}\label{sec:simulation}

\begin{figure*}
\begin{align}\label{eq:lap_pas}
  \pklaptheta=\oneon{\sqrt{2}\sigma_{k}\left(1-\expx{-\sqrt{2}\pi/\sigma_{k}}\right)}
  \cdot\expx{\frac{-\sqrt{2}\abs{\theta-\theta_{k}}}{\sigma_{k}}},\qquad \fortext \ \theta\in \left[\theta_{k}-\pi,\theta_{k}+\pi\right]
\end{align}
\hrulefill
\end{figure*}

In this section, we present numerical simulations to evaluate performances of the proposed PR based massive MIMO transmission. We assume that the BS is equipped with the $128$-antenna ULA, and the antennas are spaced with a half wavelength distance. We set the AoA interval as $\cA=[-\pi/2,\pi/2]$. We consider the typical outdoor wireless propagation environments where the channel PAS can be modeled as the truncated Laplacian distribution \cite{Pedersen00stochastic,Cho10MIMO} given by \eqref{eq:lap_pas}, shown at the top of the next page, where $\sigma_{k}$ and $\theta_{k}$ represent the AS and the mean AoA of the \ith{k} UT's channel, respectively. We assume that channel ASs are the same for all the UTs so that $\sigma_{k}=\sigma$ $(\forall k)$. We assume that all the UTs are of equal distance from the BS, and set the large scale fading coefficients as $\beta_{k}=1$ $(\forall k)$. We assume that the UTs uniformly locate in a $120^{\circ}$ sector, i.e., the mean channel AoA $\theta_{k}$ is uniformly distributed in the angle interval $[-\pi/3,\pi/3]$ in radian. The channel covariance matrices of the UTs are generated according to the model given by \remref{rem:ula_dft_eig}, and we impose the constraint in \eqref{eq:trrk} for channel power normalization. We assume that the channel training SNR and the data transmission SNR are equal such that $\rhoup=\rhou=\rhod=\rho$.

\subsection{Performance of Robust Transmission}\label{sec:simulation-robust}

In this subsection, we employ the average MSE-SD metric to evaluate performances of the robust receiver and precoder developed in \secref{sec:robust tra}. Due to the UL-DL MMSE duality given in \coref{co:ul dl duality}, we only consider the UL transmission case for brevity.

We compare performances of the robust MMSE receiver given in \eqref{eq:rob_opt_rec} with those of the conventional receiver given in \eqref{eq:conv_mmse_re}. We consider the case with $K=10$, $\sigma=10^{\circ}$, and the mean channel AoAs of the UTs from UT 1 to UT 10 are \begin{align*}
  &[0.6592,0.8499,-0.7812,0.8658,0.2772,\ntb
  &\qquad-0.8429,-0.4639,0.0982,0.9582,0.9737]
\end{align*}
in radian. We assume that the pilot length equals $\tau=5$, and consider two PR patterns. Specifically, the pilot indices that the UTs use from UT 1 to UT 10 are $[1,1,2,2,3,3,4,4,5,5]$ and $[1,2,3,4,3,5,4,5,5,3]$ for PR patterns A and B, respectively.\footnote{PR pattern B is determined by the SGPS algorithm proposed in \secref{subsec:sgps}, and we arbitrarily set a PR pattern as pattern A for comparison. We employ such setting to exemplify that pilot scheduling is crucial to the transmission performance.} In \figref{fig:mix}, we plot the average MSE-SD performances of the robust MMSE receiver (using the true and the estimated channel covariance matrices that are obtained via averaging over $100$ samples, respectively) and the conventional receiver. The lower bound of the average MSE-SD achieved by the robust MMSE receiver given in \lmref{lm:ammse lb} is also shown. We can have the following observations: 1) the average MSE-SD performance loss using the estimated channel covariance matrices compared with true channel covariance matrices can be almost neglected; 2) the robust MMSE receiver outperforms the conventional receiver, especially in high SNR regime where pilot interference dominates; 3) compared with the robust MMSE receiver, the conventional receiver is quite sensitive to the channel estimation error, and increasing the SNR may result in additional MSE-SD for the conventional receiver; 4) the closed-form lower bound of the average MSE-SD given in \lmref{lm:ammse lb} is tight over a wide SNR region for different PR patterns; 5) pilot scheduling is crucial to the data transmission performance.

\begin{figure}
\includegraphics[width=\figdoucolwid]{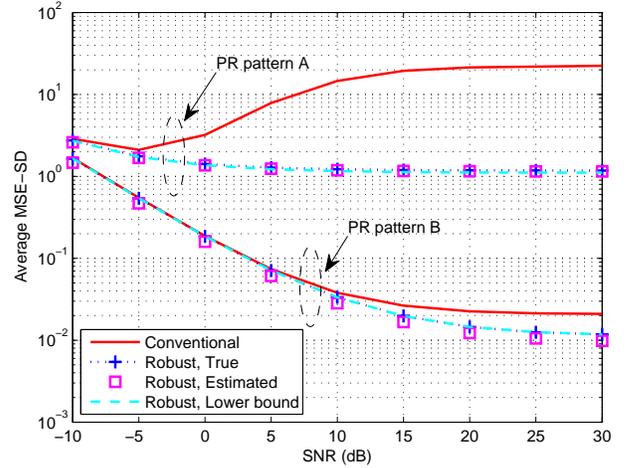}
\centering
\caption{Comparison of the average MSE-SD performances between the robust MMSE receiver (using the true and the estimated channel covariance matrices that are obtained via averaging over $100$ samples, respectively) and the conventional receiver. Results are shown versus the SNR for two specific PR patterns with $K=10$, $\tau=5$ and $\sigma=10^{\circ}$. The lower bound of the average MSE-SD achieved by the robust MMSE receiver is also depicted.}\label{fig:mix}
\end{figure}

\subsection{Performance of SGPS Algorithm}\label{sec:sgps_sim}

In this subsection, we evaluate performances of the SGPS algorithm, and compare them with those of ES. In \figref{fig:cha_mse_comp} and \figref{fig:sche_sig_mse}, we plot the MSE-CE metric in \eqref{eq:mse_est_obj} and the average MSE-SD metric in \eqref{eq:average_mmse_lb} versus the pilot length for different values of SNR with $K=10$ and $\sigma=10^{\circ}$, respectively. It can be observed that, in both cases the performances of the SGPS algorithm closely approach those of ES over a wide SNR region for different values of pilot length.

\begin{figure}
\includegraphics[width=\figdoucolwid]{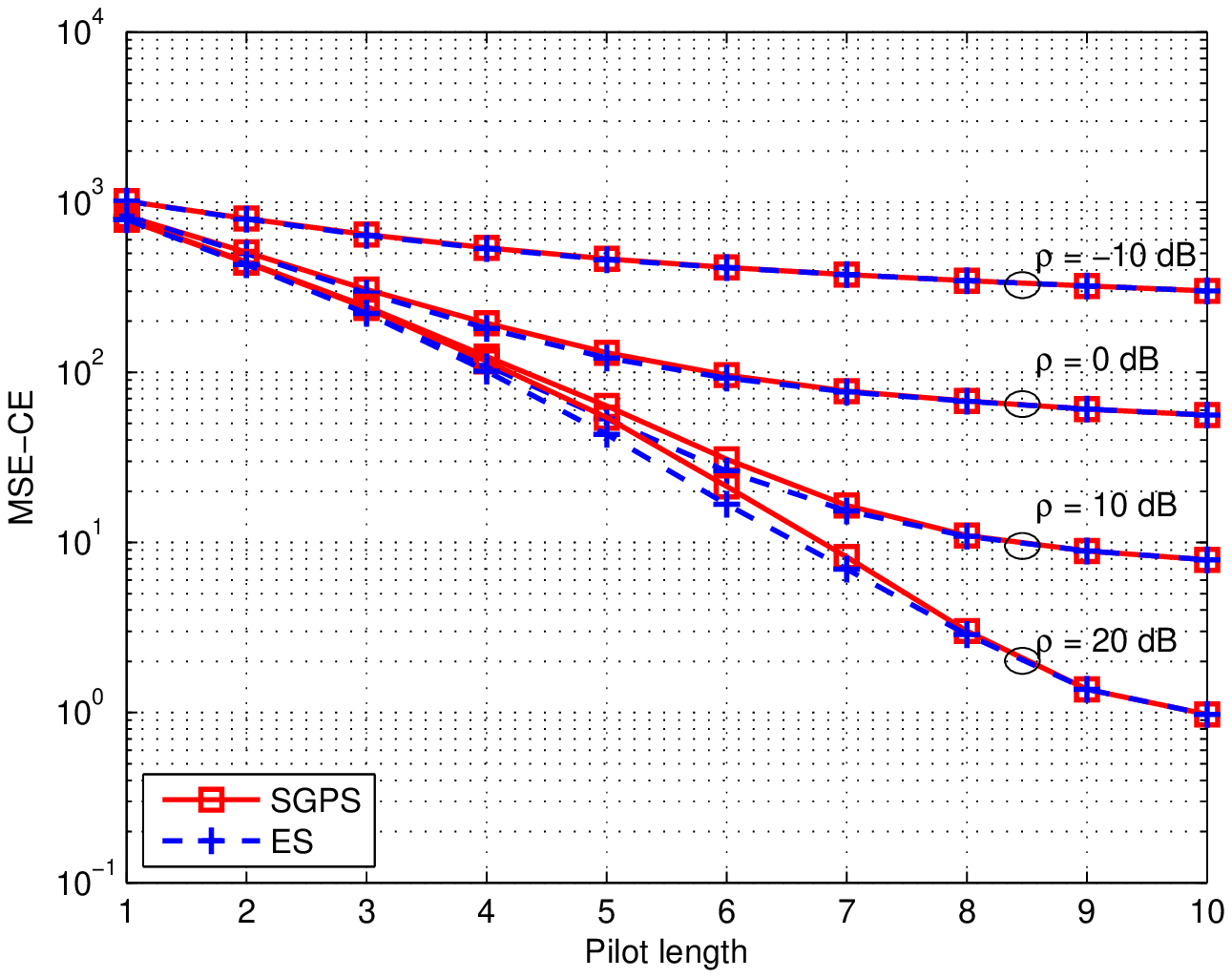}
\caption{Comparison of the MSE-CE performances between the SGPS algorithm and ES. Results are shown versus the pilot length for different values of SNR $\rho$ with $K=10$ and $\sigma=10^{\circ}$.}
\label{fig:cha_mse_comp}
\end{figure}

\begin{figure}
\includegraphics[width=\figdoucolwid]{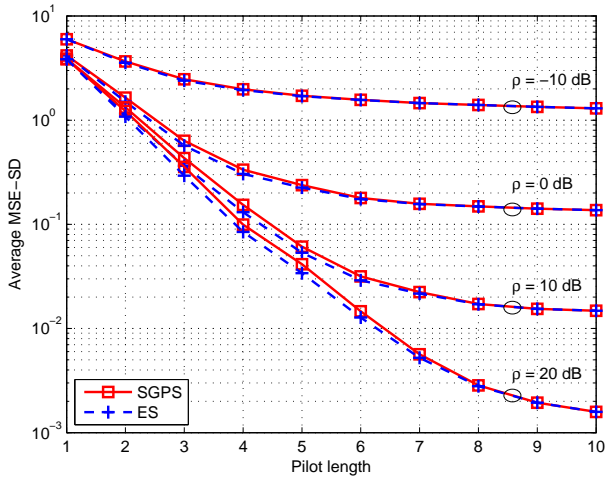}
\caption{Comparison of the average MSE-SD performances between the SGPS algorithm and ES. Results are shown versus the pilot length for different values of SNR $\rho$ with $K=10$ and $\sigma=10^{\circ}$.}
\label{fig:sche_sig_mse}
\end{figure}

\subsection{Net Spectral Efficiency Comparison}\label{subsec:net_rate}

In this subsection, we compare the \emph{net} spectral efficiency performance between the proposed PR scheme and the conventional OT scheme. The net spectral efficiency is given by
\begin{equation}\label{eq:net_spec_effi}
  \upnet{R}=\left(1-\frac{\tau}{T}\right)R^{\textrm{ach}}
\end{equation}
and the achievable rate $R^{\textrm{ach}}$ can be set as the UL sum achievable rate $\upusum{R}$, or the DL sum achievable rate $\updsum{R}$, or the weighted summation of $\upusum{R}$ and $\updsum{R}$. The UL sum achievable rate $\upusum{R}$ \cite{Hassibi03How,Hoydis13Massive} is given by \eqref{eq:ul sum rate}, shown at the top of the next page, where $\bw_{k}$ is the \ith{k} column of the UL receiver matrix $\bW$ given in \eqref{eq:rob_opt_rec}. The DL sum achievable rate $\updsum{R}$ \cite{Jose11Pilot,Hoydis13Massive} is given by \eqref{eq:dl sum rate}, shown at the top of the next page, where $\bb_{k}$ is the \ith{k} column of the DL precoding matrix $\bB$ given in \eqref{eq:rob_opt_pre}, and $\alpha$ is the power scaling performed at the UTs given in \eqref{eq:alpopt}.

\begin{figure*}
\begin{equation}\label{eq:ul sum rate}
  \upusum{R}=\sum_{k=1}^{K}{\expect{\logtwo{1+\frac{\abs{\bw_{k}^{T}\hatbg_{k}}^{2}}{\bw_{k}^{T}\left(\sum_{m\neq k}{\hatbg_{m}\hatbg_{m}^{H}}+\sum_{n=1}^{K}{\bR_{\tildebg_{n}}}+\invrho\bI\right)\bw_{k}^{*}}}}}
\end{equation}
\begin{equation}\label{eq:dl sum rate}
  \updsum{R}=\sum_{k=1}^{K}\logtwo{1+\frac{\abs{\expect{\alpha\bg_{k}^{T}\bb_{k}}}^{2}}
  {\sum_{m=1}^{K}\expect{\alpha^{2}\abs{\bg_{k}^{T}\bb_{m}}^{2}}-\abs{\expect{\alpha\bg_{k}^{T}\bb_{k}}}^{2}+\invrho\expect{\alpha^{2}}}}
\end{equation}
\hrulefill
\end{figure*}

For the PR scheme, we consider a dynamic pilot length strategy. Specifically, for a given UT set, the achievable rates in \eqref{eq:ul sum rate} and \eqref{eq:dl sum rate} can be obtained for arbitrary pilot length $\tau(<K)$ with pilot scheduling performed by the SGPS algorithm, and then the optimal pilot length and the net spectral efficiency can be obtained. While for the OT scheme, the pilot length $\tau$ is set as $\tau=K$ if $K\leq T/2$, or $\tau=\lfloor T/2\rfloor$ if $K>T/2$ where only $\lfloor T/2\rfloor$ UTs are serviced simultaneously \cite{Marzetta06How}.

The UL net spectral efficiency performances of the PR scheme and the OT scheme are compared in \figref{fig:ul_spec_effi_doppler} and \figref{fig:ul_spec_effi_snr}, while the DL net spectral efficiency performances are compared in \figref{fig:dl_spec_effi_doppler} and \figref{fig:dl_spec_effi_snr}, with $K=10$. It can be observed that, the proposed PR scheme shows performance gains over the conventional OT scheme in terms of the net spectral efficiency, and the gains become larger as the channel AS becomes smaller. Moreover, in the high SNR regime where the pilot interference dominates, and in the small coherence block length regime where the pilot overhead dominates, the proposed PR scheme provides significant performance gains. Specifically, for the case with $K=10$, $\sigma=2^{\circ}$, $\rho=20$ $\mathrm{dB}$ and $T=20$, the proposed PR scheme provides approximately $35$ $\mathrm{bits/s/Hz}$ net spectral efficiency gains over the conventional OT scheme for both the UL and DL data transmissions.

\begin{figure}
\includegraphics[width=\figdoucolwid]{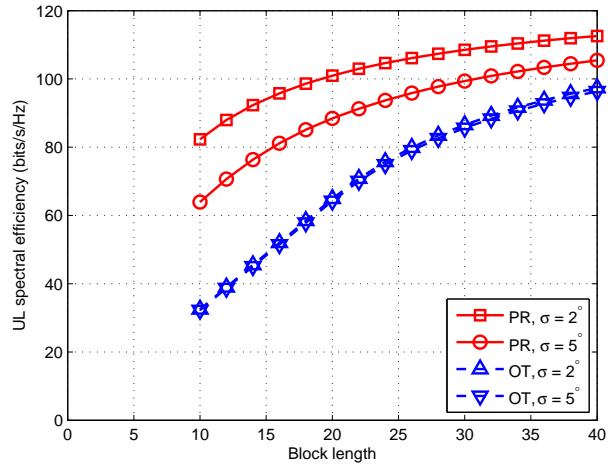}
\caption{Comparison of the UL spectral efficiency performances between the PR scheme and the OT scheme. Results are shown versus the coherence block length $T$ for different values of AS $\sigma$ with $K=10$ and $\rho=20$ $\mathrm{dB}$.}\label{fig:ul_spec_effi_doppler}
\end{figure}

\begin{figure}
\includegraphics[width=\figdoucolwid]{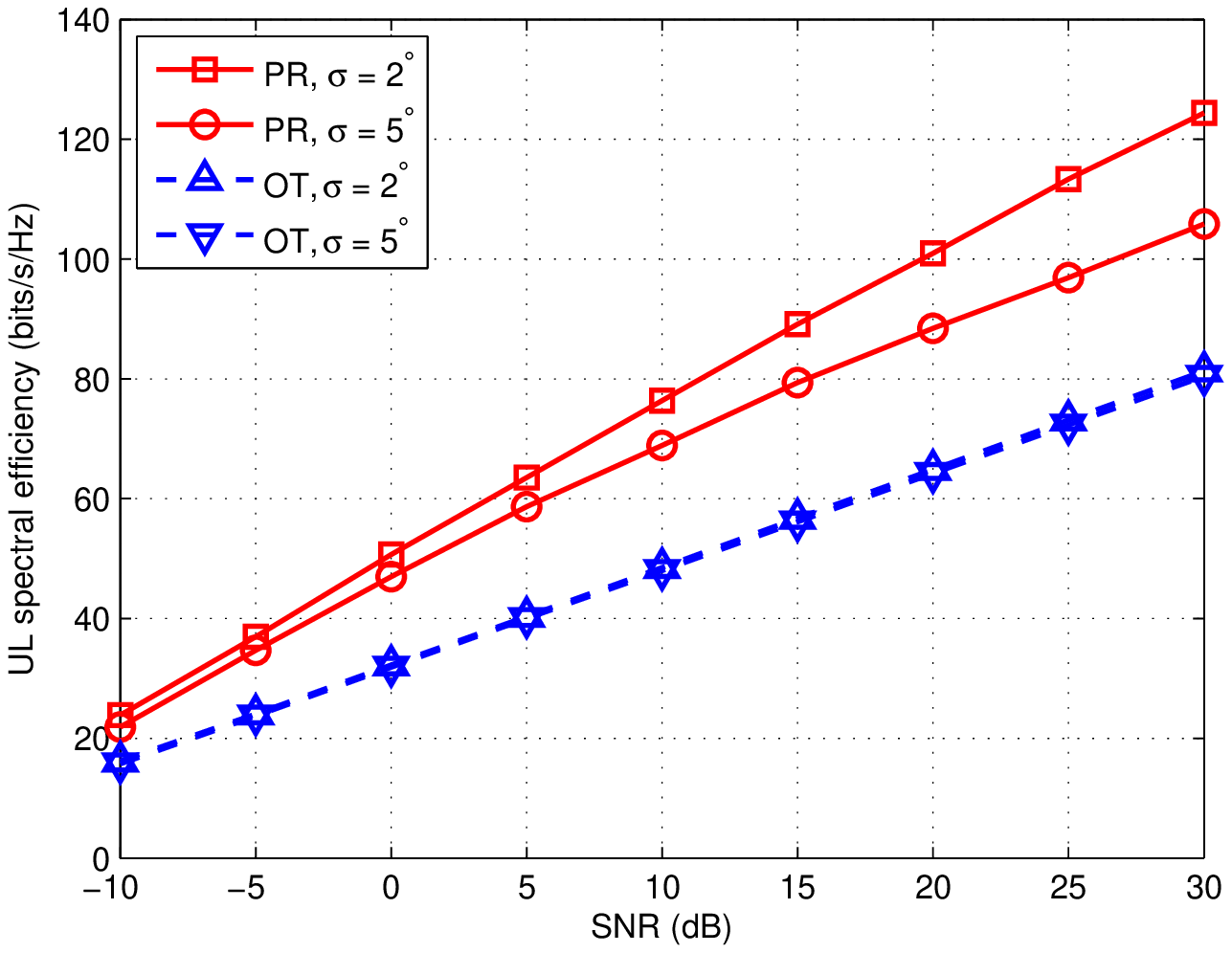}
\caption{Comparison of the UL spectral efficiency performances between the PR scheme and the OT scheme. Results are shown versus the SNR for different values of AS $\sigma$ with $K=10$ and $T=20$.}\label{fig:ul_spec_effi_snr}
\end{figure}

\begin{figure}
\includegraphics[width=\figdoucolwid]{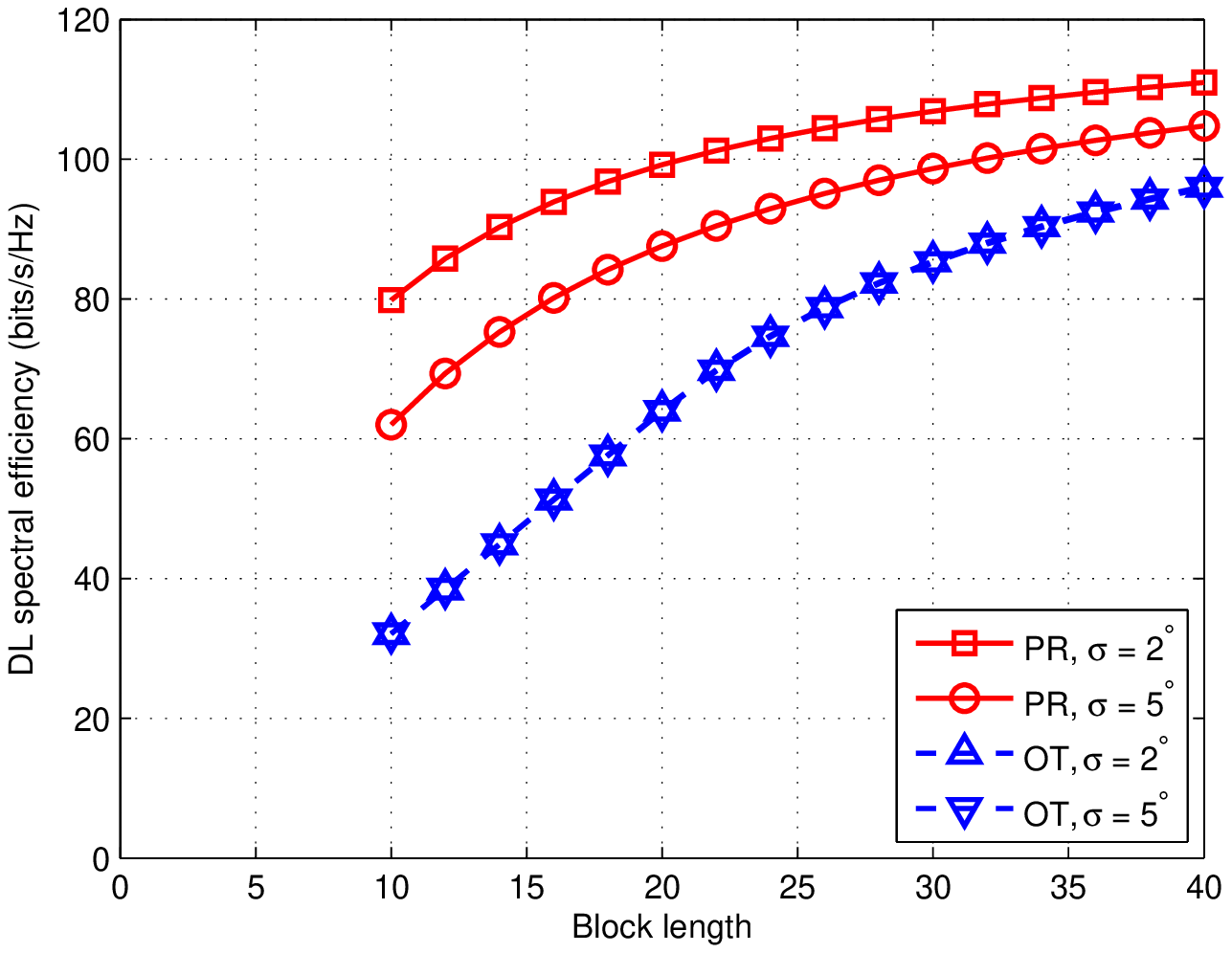}
\caption{Comparison of the DL spectral efficiency performances between the PR scheme and the OT scheme. Results are shown versus the coherence block length $T$ for different values of AS $\sigma$ with $K=10$ and $\rho=20$ $\mathrm{dB}$.}\label{fig:dl_spec_effi_doppler}
\end{figure}

\begin{figure}
\includegraphics[width=\figdoucolwid]{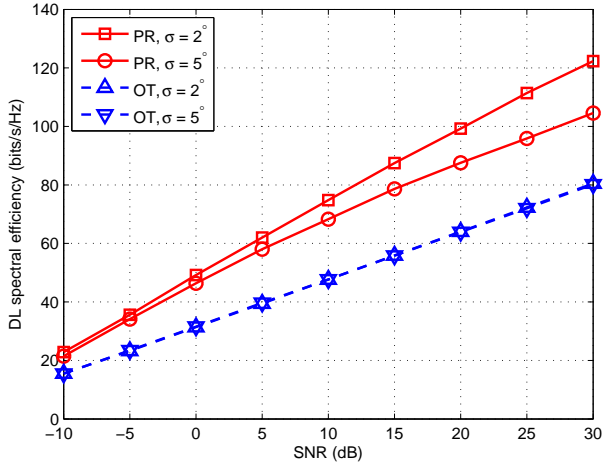}
\caption{Comparison of the DL spectral efficiency performances between the PR scheme and the OT scheme. Results are shown versus the SNR for different values of AS $\sigma$ with $K=10$ and $T=20$.}\label{fig:dl_spec_effi_snr}
\end{figure}

\section{Conclusion}\label{sec:conc}

In this paper, we proposed pilot reuse (PR) in single cell for massive MIMO transmission to reduce the pilot overhead. We exploited the fact that, in realistic outdoor wireless propagation environments where the BS is located at an elevated position, most of the channel power lies in a limited number of spatial directions compared with the whole massive MIMO channel dimension, and thus PR among UTs of spatial localization becomes feasible and beneficial. We first established the relationship between the channel covariance matrix and the channel PAS for the massive MIMO channels. Then we showed that, if channel AoA intervals of the UTs reusing the pilots are non-overlapping, then MSE of the channel estimation can be minimized. We also developed the robust multiuser UL receiver and DL precoder with the channel estimation error due to PR taken into account, and revealed the UL-DL MMSE duality between them. Moreover, we presented pilot scheduling under two MMSE related criteria, and proposed a low complexity pilot scheduling algorithm motivated by the channel AoA non-overlapping condition. The simulation results showed significant performance gains of the proposed PR scheme over the conventional OT scheme in terms of net spectral efficiency.

\appendices

\section{Proof of {\lmref{lm:Decomp_cov}}}\label{app:lemma_cov_decomp}
From the definition of $\bV$ in \eqref{eq:cov_eigv_mat}, we have
\begin{align}
  &\liminfty{M}\vecele{\bV^{H}\bV-\bI_{M}}{i,j}\ntb
  &\qquad=\liminfty{M}\oneon{M}\left(\vx{\vartheta\left(\psi_{i-1}\right)}\right)^{H}\vx{\vartheta\left(\psi_{j-1}\right)}-\delfunc{i-j}\notag\\
  &\qquad\equaa\delfunc{i-j}-\delfunc{i-j}=0
\end{align}
where (a) follows from \asref{ass:asymp_ortho_asv} and $\vartheta(\psi)$ is a strictly increasing function. This concludes the proof of \eqref{eq:cov_eig_uni}.

The proof of \eqref{eq:Decomp_cov} can be obtained as
\begin{align}
  &\liminfty{M}\vecele{\bR_{k}-\bV\diag{\br_{k}}\bV^{H}}{i,j}\ntb
  &\quad\equaa\liminfty{M}\vecele{\bR_{k}}{i,j}-\ntb
  &\qquad\liminfty{M}\vecele{\oneon{M}\sum_{m=1}^{M}\vecele{\br_{k}}{m}
  \vx{\vartheta\left(\psi_{m-1}\right)}\bv^{H}\left(\vartheta\left(\psi_{m-1}\right)\right)
  }{i,j}\ntb
  &\quad\equab\liminfty{M}\vecele{\bR_{k}}{i,j}-\beta_{k}\liminfty{M}
  \sum_{m=1}^{M}\vecele{\vx{\vartheta\left(\psi_{m-1}\right)}}{i}\ntb
  &\qquad\cdot\vecele{\vx{\vartheta\left(\psi_{m-1}\right)}}{j}^{*}
  \mathsf{S}_{k}\left(\vartheta\left(\psi_{m-1}\right)\right)\left[\vartheta\left(\psi_{m}\right)-\vartheta\left(\psi_{m-1}\right)\right]\notag\\
  &\quad\equac\beta_{k}\int\limits_{\thetamin}^{\thetamax}\!
  {\vecele{\vx{\theta}}{i}\vecele{\vx{\theta}}{j}^{*}\mathsf{S}_{k}\left(\theta\right)\intdx{\theta}}\ntb
  &\qquad-\beta_{k}\int\limits_{\vartheta\left(\psi_{0}\right)}^{\vartheta\left(\psi_{M}\right)}\!
  {\vecele{\vx{\vartheta\left(\psi\right)}}{i}\vecele{\vx{\vartheta\left(\psi\right)}}{j}^{*}\mathsf{S}_{k}\left(\vartheta\left(\psi\right)\right)\intdx{\vartheta\left(\psi\right)}}\notag\\
  &\quad\equad\beta_{k}\int\limits_{\thetamin}^{\thetamax}\!
  {\vecele{\vx{\theta}}{i}\vecele{\vx{\theta}}{j}^{*}\mathsf{S}_{k}\left(\theta\right)\intdx{\theta}}\ntb
  &\qquad-\beta_{k}\int\limits_{\thetamin}^{\thetamax}\!
  {\vecele{\vx{\theta}}{i}\vecele{\vx{\theta}}{j}^{*}\mathsf{S}_{k}\left(\theta\right)\intdx{\theta}}=
  0
\end{align}
where (a) follows from \eqref{eq:cov_eigv_mat}, (b) follows from \eqref{eq:cov_eigen_ele}, (c) follows from \eqref{eq:def_chan_cov} and the integral definition, (d) follows from that $\vartheta\left(\psi_{0}\right)=\vartheta\left(0\right)=\thetamin$ and $\vartheta\left(\psi_{M}\right)=\vartheta\left(1\right)=\thetamax$.

\section{Proof of {\thref{th:est_min_con}}}\label{app:est_min_con}

We start by presenting a lemma that is required in the following proof.
\begin{lemma}\label{lm:matrix product}
For $\bA\succeq\bzero$ and $\bB\succeq\bzero$, $\thcos{\bA}{\bB}=\pi/2$ is equivalent to $\bA\bB=\bzero$.
\end{lemma}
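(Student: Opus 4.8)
\textbf{Proof proposal for Lemma~\ref{lm:matrix product}.}
The plan is to prove the two implications separately, exploiting the fact that for Hermitian positive semi-definite matrices the Frobenius inner product $\tr{\bA\bB}$ is a genuine (non-negative) quantity that vanishes only in a very rigid way. First I would unwind the definition \eqref{eq:thcosdef}: since $\normf{\bA},\normf{\bB}>0$ in the nontrivial case (if either is $\bzero$ both statements hold trivially), the equality $\thcos{\bA}{\bB}=\pi/2$ is equivalent to $\tr{\bA\bB}=0$. So the whole lemma reduces to the claim that, for $\bA,\bB\succeq\bzero$, $\tr{\bA\bB}=0$ if and only if $\bA\bB=\bzero$.

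The direction $\bA\bB=\bzero \Rightarrow \tr{\bA\bB}=0$ is immediate. For the converse, the key step is to symmetrize the product. Write $\bA=\bA^{1/2}\bA^{1/2}$ using the unique positive semi-definite square root, and use cyclicity of the trace to get
\begin{equation}
  \tr{\bA\bB}=\tr{\bA^{1/2}\bB\bA^{1/2}}=\tr{\left(\bB^{1/2}\bA^{1/2}\right)^{H}\left(\bB^{1/2}\bA^{1/2}\right)}=\sqrnormf{\bB^{1/2}\bA^{1/2}}.
\end{equation}
Hence $\tr{\bA\bB}=0$ forces $\bB^{1/2}\bA^{1/2}=\bzero$. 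Multiplying on the left by $\bB^{1/2}$ and on the right by $\bA^{1/2}$ then gives $\bB\bA=\bzero$, and taking conjugate transposes (both $\bA$ and $\bB$ are Hermitian) yields $\bA\bB=\bzero$ as well. This closes the equivalence.

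I do not anticipate a serious obstacle here; the only mild subtlety is justifying the use of the positive semi-definite square root and being careful that $\tr{\bA\bB}$ is real and non-negative in the first place (which is exactly what the square-root identity above certifies), so that the reduction from the angle condition to $\tr{\bA\bB}=0$ is legitimate. Once that identity is in place, everything is a one-line consequence. With Lemma~\ref{lm:matrix product} established, the proof of Theorem~\ref{th:est_min_con} proceeds by lower-bounding each term $\tr{\bR_{\tildebg_{k}}}=\tr{\bR_{k}-\bR_{k}\invcovpi{k}\bR_{k}}$: enlarging $\covpi{k}$ from $\sum_{\ell\in\KPset{k}}\bR_{\ell}+\invtaurhoup\bI$ only increases $\covpi{k}^{-1}$-weighted quantities in the wrong direction, so one shows $\covpi{k}\succeq\bR_{k}+\invtaurhoup\bI$ always, giving $\tr{\bR_{\tildebg_{k}}}\geq\tr{\bR_{k}-\bR_{k}\inv{(\bR_{k}+\invtaurhoup\bI)}\bR_{k}}$ with equality iff $\sum_{\ell\in\setdif{\KPset{k}}{\eletoset{k}}}\bR_{\ell}$ acts as zero on the relevant subspace, which by Lemma~\ref{lm:matrix product} (applied pairwise, using $\bR_\ell\succeq\bzero$) is exactly the orthogonality condition $\thcos{\bR_{i}}{\bR_{j}}=\pi/2$ for co-pilot users.
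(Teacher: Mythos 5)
Your proposal is correct and follows essentially the same route as the paper: reduce $\thcos{\bA}{\bB}=\pi/2$ to $\tr{\bA\bB}=0$ via the definition, then use the fact that for $\bA,\bB\succeq\bzero$ this is equivalent to $\bA\bB=\bzero$. The only difference is that the paper cites this last equivalence from a matrix handbook, whereas you prove it directly through the identity $\tr{\bA\bB}=\sqrnormf{\bB^{1/2}\bA^{1/2}}$, which is the standard argument and makes the proof self-contained.
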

\begin{proof}
Recalling \eqref{eq:thcosdef}, $\thcos{\bA}{\bB}=\pi/2$ is equivalent to $\tr{\bA\bB}=0$. Furthermore, $\tr{\bA\bB}=0$ is equivalent to $\bA\bB=\bzero$ for $\bA\succeq\bzero$ and $\bB\succeq\bzero$ \cite[Prop. 4.26]{Seber08Matrix}. This concludes the proof.
\end{proof}

Now we proceed with the proof of the theorem. Due to the positive semi-definiteness of the covariance matrix, we can obtain
\begin{align}\label{eq:rbgk_min}
  \tr{\bR_{\tildebg_{k}}}&=\tr{\bR_{k}-\bR_{k}\invcovpi{k}\bR_{k}}\ntb
  &=\tr{\bR_{k}-\bR_{k}\inv{\left(\sum_{\ell\in\KPset{k}}{\bR_{\ell}}+\invtaurhoup\bI\right)}\bR_{k}}\notag\\
  &\geq\tr{\bR_{k}-\bR_{k}\inv{\left(\bR_{k}+\invtaurhoup\bI\right)}\bR_{k}}.
\end{align}

From \lmref{lm:matrix product} which states that $\thcos{\bR_{i}}{\bR_{j}}=\pi/2$ is equivalent to $\bR_{i}\bR_{j}=\bzero$, we can obtain
\begin{align}
  \covpi{k}\bR_{k}&=\left(\sum_{\ell\in\KPset{k}}{\bR_{\ell}+\invtaurhoup\bI}\right)\bR_{k}\ntb
  &=\left(\bR_{k}+\invtaurhoup\bI\right)\bR_{k}=\bR_{k}\left(\bR_{k}+\invtaurhoup\bI\right)
\end{align}
which indicates that
\begin{align}
  \invcovpi{k}\bR_{k}=\bR_{k}\inv{\left(\bR_{k}+\invtaurhoup\bI\right)}
  =\inv{\left(\bR_{k}+\invtaurhoup\bI\right)}\bR_{k}.\label{eq:briinvrpiup}
\end{align}
Substituting \eqref{eq:briinvrpiup} into \eqref{eq:cov_error}, we can obtain
\begin{equation}\label{eq:rbgk_sim}
  \tr{\bR_{\tildebg_{k}}}=\tr{\bR_{k}-\bR_{k}\inv{\left(\bR_{k}+\invtaurhoup\bI\right)}\bR_{k}}
\end{equation}
which achieves the minimum in \eqref{eq:rbgk_min}. This concludes the proof.

\section{Proof of {\thref{th:ul bf}}}\label{app:ul bf}

The MSE-SD defined in \eqref{eq:mseu} can be simplified as
\begin{align}\label{eq:upu_eps_int1}
\upu{\epsilon}&=
\mathrm{tr}\Bigg\{\bW^{T}\left(\hatbG\hatbG^{H}+\sum_{k=1}^{K}{\bR_{\tildebg_{k}}}+\invrhou\bI\right)\bW^{*}\ntb
&\qquad\qquad+\bI-\bW^{T}\hatbG-\hatbG^{H}\bW^{*}\Bigg\}.
\end{align}
Note that $\upu{\epsilon}$ is convex with respect to $\bW$.

By setting the derivative of $\upu{\epsilon}$ with respect to $\bW^{*}$ \cite{Hjorungnes11Complex} to zero,
\begin{align}
  \ppd{\bW^{*}}\upu{\epsilon}&=\left(\hatbG\hatbG^{H}+\sum_{k=1}^{K}{\bR_{\tildebg_{k}}}+\invrhou\bI\right)^{T}\bW-\hatbG^{*}\notag\\
  &\equaa\left(\hatbG\hatbG^{H}+\sum_{k=1}^{K}{\bR_{\tildebg_{k}}}+\invrhou\bI\right)^{*}\bW-\hatbG^{*}=\bzero
\end{align}
where (a) follows from $\bA^{T}=\bA^{*}$ for the Hermitian matrix $\bA$, we can obtain
  \begin{equation}\label{eq:ul_rob_int1}
    \upopt{\bW}=\left[\inv{\left(\hatbG\hatbG^{H}+\sum_{k=1}^{K}{\bR_{\tildebg_{k}}}+\invrhou\bI\right)}\hatbG\right]^{*}.
  \end{equation}

Substituting \eqref{eq:ul_rob_int1} into \eqref{eq:upu_eps_int1}, we can obtain the corresponding MSE-SD as
\begin{align}
  \upumin{\epsilon}&=\tr{\bI-\hatbG^{H}\inv{\left(\hatbG\hatbG^{H}+\sum_{k=1}^{K}{\bR_{\tildebg_{k}}}+\invrhou\bI\right)}\hatbG}\notag\\
  &\equaa\tr{\inv{\left(\bI+\hatbG^{H}\inv{\left(\sum_{k=1}^{K}{\bR_{\tildebg_{k}}}+\invrhou\bI\right)}\hatbG\right)}}
\end{align}
where (a) follows from the Woodbury matrix inversion identity \cite[Prop. 15.3]{Seber08Matrix}. This concludes the proof.
\section{Proof of {\thref{th:dl bf}}}\label{app:dl bf}

We start by simplifying the MSE-SD defined in \eqref{eq:msed} as
\begin{align}\label{eq:msed_simp}
  \upd{\epsilon}&=\expect{\sqrnormt{\alpha\left(\bG^{T}\bB\upd{\ba}+\invsqrtrhod\bnd\right)-\upd{\ba}}}\ntb
  &=\expect{\sqrnormt{\left(\alpha\bG^{T}\bB-\bI\right)\upd{\ba}}}+\frac{\alpha^{2}K}{\rhod}\ntb
  &=\mathrm{tr}\Bigg\{\alpha^{2}\bB^{H}\left(\hatbG^{*}\hatbG^{T}+\sum_{k=1}^{K}\bR_{\tildebg_{k}}^{*}\right)\bB\ntb
  &\qquad\qquad-\alpha\hatbG^{T}\bB-\alpha\bB^{H}\hatbG^{*}\Bigg\}
  +\left(\frac{\alpha^{2}}{\rhod}+1\right)K.
\end{align}
The simplified objective function in \eqref{eq:msed_simp} is non-convex with respect to $(\bB,\alpha)$. We first show that there exists a global minimum for the problem \eqref{eq:dl opt prob} in the following lemma.

\begin{lemma}\label{lm:exist_glob_dl_pre}
For the problem \eqref{eq:dl opt prob}, there exists a global optimal solution.
\end{lemma}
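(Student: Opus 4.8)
The plan is to show that the feasible set of \eqref{eq:dl opt prob} can be reduced to a compact set without changing the infimum, and then invoke continuity of the objective together with the Weierstrass extreme value theorem. First I would observe that the constraint set $\{\bB : \tr{\bB\bB^{H}}\leq K\}$ is already closed and bounded in $\mathds{C}^{M\times K}$, hence compact; the only genuine non-compactness comes from the scalar $\alpha\in\mathds{R}$, which is unconstrained. So the key step is to argue that large values of $\abs{\alpha}$ cannot be optimal: from the simplified expression \eqref{eq:msed_simp}, the term $(\alpha^{2}/\rhod + 1)K$ grows quadratically in $\abs{\alpha}$ while the cross terms $-\alpha\hatbG^{T}\bB - \alpha\bB^{H}\hatbG^{*}$ grow only linearly in $\abs{\alpha}$ (uniformly over the compact $\bB$-set, since $\normf{\bB}\leq\sqrt{K}$) and the quadratic-in-$\bB$ term is nonnegative. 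Hence $\upd{\epsilon}\to+\infty$ as $\abs{\alpha}\to\infty$, uniformly in $\bB$ over the feasible set.

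Concretely, I would fix a reference feasible point, say $(\bB,\alpha)=(\bzero,0)$, which gives $\upd{\epsilon}=K$. Then there exists a finite $\alpha_{0}>0$ such that for all feasible $\bB$ and all $\abs{\alpha}>\alpha_{0}$ we have $\upd{\epsilon}>K$; this follows from the coercivity estimate above, using $\abs{\tr{\alpha\hatbG^{T}\bB}}\leq\abs{\alpha}\,\normf{\hatbG}\,\normf{\bB}\leq\abs{\alpha}\,\normf{\hatbG}\sqrt{K}$ and nonnegativity of $\tr{\alpha^{2}\bB^{H}(\hatbG^{*}\hatbG^{T}+\sum_{k}\bR_{\tildebg_{k}}^{*})\bB}$ (the inner matrix is Hermitian positive semi-definite). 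Therefore the infimum of \eqref{eq:dl opt prob} over the full feasible set equals the infimum over the restricted set $\{\bB:\tr{\bB\bB^{H}}\leq K\}\times\{\alpha:\abs{\alpha}\leq\alpha_{0}\}$, which is compact.

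Finally, the objective $\upd{\epsilon}$ as written in \eqref{eq:msed_simp} is a polynomial in the real and imaginary parts of the entries of $\bB$ and in $\alpha$ (note $\hatbG$, $\bR_{\tildebg_{k}}$, $\rhod$ are fixed data in a given coherence block once the channel estimate and PR pattern are conditioned on), hence continuous on this compact set. By the Weierstrass theorem it attains its minimum there, and by the reduction this is a global minimum over the original feasible set. I would close by noting this justifies writing $\mini{\bB,\alpha}$ rather than $\inf$ in \eqref{eq:dl opt prob} and legitimizes the subsequent first-order (KKT) analysis that identifies the explicit optimizer \eqref{eq:rob_opt_pre}--\eqref{eq:power_norm_gamma}.

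The main obstacle I anticipate is making the coercivity-in-$\alpha$ argument genuinely \emph{uniform} over the $\bB$-set rather than pointwise: one must be careful that the linear-in-$\alpha$ cross term is bounded in absolute value by a constant times $\abs{\alpha}$ \emph{independent of $\bB$}, which is exactly where compactness of the $\bB$-constraint ($\normf{\bB}\leq\sqrt{K}$) is used, and one must check that the sign-indefinite cross term cannot be leveraged to keep $\upd{\epsilon}$ bounded as $\abs{\alpha}\to\infty$ — it cannot, because its growth rate is strictly dominated by the $\alpha^{2}$ penalty. Everything else is routine.
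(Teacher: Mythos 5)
Your argument is correct, but it reaches the conclusion by a genuinely different route from the paper's. You treat the only source of non-compactness, the free scalar $\alpha$, by a coercivity estimate: the coefficient of $\alpha^{2}$ in \eqref{eq:msed_simp} is bounded below by $K/\rhod>0$ uniformly in $\bB$ (since $\hatbG^{*}\hatbG^{T}+\sum_{k}\bR_{\tildebg_{k}}^{*}\succeq\bzero$), while the linear cross term is bounded by $2\abs{\alpha}\normf{\hatbG}\sqrt{K}$ over the Frobenius ball $\tr{\bB\bB^{H}}\leq K$; hence the objective exceeds the reference value $K$ outside a compact box in $(\bB,\alpha)$, and Weierstrass applies on that box. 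The paper instead eliminates $\alpha$ analytically: for each fixed $\bB$ it solves the inner unconstrained quadratic in $\alpha$ in closed form, substitutes back to obtain the reduced problem \eqref{eq:equiv_dl_opt_prob} in $\bB$ alone, and invokes Weierstrass on the compact set $\{\bB:\tr{\bB\bB^{H}}\leq K\}$. Each approach buys something: the paper's partial minimization produces the explicit optimal scaling $\alpha(\bB)$ as a by-product and a concrete reduced objective, whereas your coercivity argument is more elementary, avoids having to verify that the denominator of the reduced objective stays away from zero (the paper needs $K/\rhod>0$ for this), and would survive in settings where the inner problem in $\alpha$ admits no closed form. Your anticipated obstacle --- uniformity of the coercivity bound over $\bB$ --- is indeed the one point that must be made explicit, and your use of $\normf{\bB}\leq\sqrt{K}$ together with positive semi-definiteness of the quadratic form handles it completely.
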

\begin{proof}
The problem in \eqref{eq:dl opt prob} is equivalent to
\begin{align}\label{eq:dl opt prob_alt}
\begin{split}
    \minis{\bB}{\quad\minis{\alpha(\bB)}}\qquad&\upd{\epsilon}\left(\bB,\alpha\right)\\
    \st\qquad&\tr{\bB\bB^{H}}\leq K
\end{split}
\end{align}
and the optimal $\alpha$ for the inner unconstrained optimization problem can be readily obtained as
\begin{equation}
  \alpha=\frac{\tr{\hatbG^{T}\bB+\bB^{H}\hatbG^{*}}}{2\left[K/\rhod+\tr{\bB^{H}\left(\hatbG^{*}\hatbG^{T}+\sum_{k=1}^{K}\bR_{\tildebg_{k}}^{*}\right)\bB}\right]}.
\end{equation}
Then the problem \eqref{eq:dl opt prob_alt} is equivalent to
\begin{align}\label{eq:equiv_dl_opt_prob}
    &\mini{\bB}\qquad\upd{\epsilon}\left(\bB\right)\ntb
    &=K-\frac{\left[\tr{\hatbG^{T}\bB+\bB^{H}\hatbG^{*}}\right]^{2}}{4\left[K/\rhod+\tr{\bB^{H}\left(\hatbG^{*}\hatbG^{T}+\sum_{k=1}^{K}\bR_{\tildebg_{k}}^{*}\right)\bB}\right]}\ntb
    &\st\qquad\tr{\bB\bB^{H}}\leq K.
\end{align}
The feasible set of \eqref{eq:equiv_dl_opt_prob} given by $\{\bB:\tr{\bB\bB^{H}}\leq K\}$ is compact (closed and bounded), and the objective function of \eqref{eq:equiv_dl_opt_prob} is continuous over the feasible set. Thus, according to Weierstrass extreme value theorem \cite[Appx. E]{Horn12Matrix}, there exists a global minimum for the problem \eqref{eq:equiv_dl_opt_prob}, and so does the equivalent problem \eqref{eq:dl opt prob}.
\end{proof}

\lmref{lm:exist_glob_dl_pre} shows that there exists a global optimum for the problem \eqref{eq:dl opt prob}. Note that the global optimal solution should satisfy the Karush-Kuhn-Tucker (KKT) necessary conditions \cite{Boyd04Convex}. In the following we will seek out all the solutions that satisfy the KKT conditions and identify the optimal solution among them.

The Lagrangian associated with the problem \eqref{eq:dl opt prob} is
\begin{align}\label{eq:lag_msed}
  \mathcal{L}\left(\bB,\alpha,\lambda\right)=\upd{\epsilon}+\lambda\left(\tr{\bB\bB^{H}}-K\right)
\end{align}
where $\upd{\epsilon}$ is given in \eqref{eq:msed_simp}, and $\lambda$ is the Lagrange multiplier associated with the inequality constraint.

The KKT necessary conditions for the problem \eqref{eq:dl opt prob} can be obtained as \cite{Hjorungnes11Complex}
\begin{align}
  &\ppd{\bB^{*}}\mathcal{L}\left(\bB,\alpha,\lambda\right)
  =\alpha^{2}\left(\hatbG^{*}\hatbG^{T}+\sum_{k=1}^{K}\bR_{\tildebg_{k}}^{*}\right)\bB\ntb
  &\qquad\qquad-\alpha\hatbG^{*}+\lambda\bB=\bzero\label{eq:kkt pbpl}\\
  &\ppd{\alpha}\mathcal{L}\left(\bB,\alpha,\lambda\right)
  =2\alpha\tr{\bB^{H}\left(\hatbG^{*}\hatbG^{T}+\sum_{k=1}^{K}\bR_{\tildebg_{k}}^{*}\right)\bB}\ntb
  &\qquad\qquad-\tr{\hatbG^{T}\bB+\bB^{H}\hatbG^{*}}+\frac{2\alpha K}{\rhod}=0\label{eq:kkt palphapl}\\
  &\lambda\geq0,\quad\tr{\bB\bB^{H}}\leq K\\
  &\lambda\left(\tr{\bB\bB^{H}}-K\right)=0.\label{eq:kkt lambdakkh}
\end{align}

An obvious solution that satisfies the above KKT conditions is $(\alpha=0,\bB=\bzero,\lambda=0)$, and the corresponding $\upd{\epsilon}$ equals $K$. For the case with $\alpha\neq0$, \eqref{eq:kkt pbpl} is equivalent to
\begin{align}\label{eq:alphaggh}
  \hatbG^{*}=\alpha\left(\hatbG^{*}\hatbG^{T}+\sum_{k=1}^{K}\bR_{\tildebg_{k}}^{*}+\frac{\lambda}{\alpha^{2}}\bI\right)\bB
\end{align}
which leads to
\begin{align}\label{eq:hatbghb}
\hatbG^{T}\bB&=\bB^{H}\hatbG^{*}\ntb
&\qquad=\alpha\bB^{H}\left(\hatbG^{*}\hatbG^{T}+\sum_{k=1}^{K}\bR_{\tildebg_{k}}^{*}+\frac{\lambda}{\alpha^{2}}\bI\right)\bB.
\end{align}

Combining \eqref{eq:hatbghb} with \eqref{eq:kkt palphapl} yields
\begin{align}\label{eq:condtion krhod}
  \frac{\alpha^{2}K}{\rhod}=\lambda\tr{\bB\bB^{H}}.
\end{align}
Substituting \eqref{eq:condtion krhod} into \eqref{eq:kkt lambdakkh}, we can obtain
\begin{align}\label{eq:lambdaalpha}
  \lambda=\frac{\alpha^{2}}{\rhod}>0,\qquad\tr{\bB\bB^{H}}=K.
\end{align}
Substituting \eqref{eq:lambdaalpha} into \eqref{eq:alphaggh} yields
\begin{align}\label{eq:bbalpha}
  \bB&=\oneon{\alpha}\inv{\left(\hatbG^{*}\hatbG^{T}+\sum_{k=1}^{K}\bR_{\tildebg_{k}}^{*}+\invrhod\bI\right)}\hatbG^{*}\ntb
  &=\oneon{\alpha}\left[\inv{\left(\hatbG\hatbG^{H}+\sum_{k=1}^{K}\bR_{\tildebg_{k}}+\invrhod\bI\right)}\hatbG\right]^{*}
\end{align}
where $\alpha$ is chosen to satisfy the constraint $\tr{\bB\bB^{H}}=K$.

Substituting \eqref{eq:bbalpha} into \eqref{eq:msed_simp}, we can obtain the corresponding MSE-SD as
\begin{subequations}
\begin{align}
  \upd{\epsilon}&=\tr{\bI-\hatbG^{T}\inv{\left(\hatbG^{*}\hatbG^{T}+\sum_{k=1}^{K}{\bR_{\tildebg_{k}}^{*}}+\invrhod\bI\right)}\hatbG^{*}}\notag\\
  &=\tr{\bI-\hatbG^{H}\inv{\left(\hatbG\hatbG^{H}+\sum_{k=1}^{K}{\bR_{\tildebg_{k}}}+\invrhod\bI\right)}\hatbG}\label{eq:updsum_med}\\
  &=\tr{\inv{\left(\bI+\hatbG^{H}\inv{\left(\sum_{k=1}^{K}{\bR_{\tildebg_{k}}}+\invrhod\bI\right)}\hatbG\right)}}\label{eq:updsum_med_mod}
\end{align}
\end{subequations}
where \eqref{eq:updsum_med} follows from the trace identity $\tr{\bA}=\tr{\bA^{T}}$ \cite[Eq. (2.95)]{Hjorungnes11Complex} and $\bR_{\tildebg_{k}}$ is Hermitian, \eqref{eq:updsum_med_mod} follows from the Woodbury matrix inversion identity \cite[Prop. 15.3]{Seber08Matrix}. Note that the MSE-SD in \eqref{eq:updsum_med} is smaller than $K$ that previously obtained from the solution $(\alpha=0,\bB=\bzero,\lambda=0)$. Therefore, we obtain that the precoder given by \eqref{eq:bbalpha} is optimal. This concludes the proof.

\section{Proof of {\lmref{lm:ammse lb}}}\label{app:ammse lb}

Via invoking the matrix-valued Jensen's inequality which states that $\expect{\inv{\bA}}\succeq\inv{\left(\expect{\bA}\right)}$ for $\bA\succ\bzero$ \cite[Prop. 21.64]{Seber08Matrix}, we can obtain
\begin{align}
  \expect{\uptmin{\epsilon}}
  &\geq\tr{\inv{\left(\bI_{K}+\expect{\hatbG^{H}\inv{\left(\uptneff{\bR}\right)}\hatbG}\right)}}\ntb
  &=\tr{\inv{\left(\bI_{K}+\bOmega\right)}}
\end{align}
and $\bOmega$ satisfies that
\begin{align}
  \vecele{\bOmega}{i,j}&=\vecele{\expect{\hatbG^{H}\inv{\left(\uptneff{\bR}\right)}\hatbG}}{i,j}\ntb
  &=\expect{\hatbg_{i}^{H}\inv{\left(\uptneff{\bR}\right)}\hatbg_{j}}\notag\\
  &\equaa\expect{\left(\ypiup\right)^{H}\invcovpi{i}\bR_{i}\inv{\left(\uptneff{\bR}\right)}\bR_{j}\invcovpi{j}\ypjup}\ntb
  &\equab\tr{\invcovpi{i}\bR_{i}\inv{\left(\uptneff{\bR}\right)}\bR_{j}}\delfunc{\Pindex{i}-\Pindex{j}}
\end{align}
where (a) follows from \eqref{eq:est_cha}, and (b) follows from \eqref{eq:ypkup}. This concludes the proof.

\section{Proof of {\thref{th:pilot reuse}}}\label{app:pilot reuse}

Via invoking the Schwartz inequality as in \cite[Lemma 1]{Ohno04Capacity}, we can obtain
\begin{equation}
  \upnot{\epsilon}{t,alb}=\tr{\inv{\left(\bI_{K}+\bOmega\right)}}\geq\sum_{i=1}^{K}\oneon{1+\vecele{\bOmega}{i,i}}
\end{equation}
where the equality is attained if and only if $\bOmega$ is diagonal.

Recalling \lmref{lm:matrix product} which states that $\bR_{i}\bR_{j}=\bzero$ is equivalent to $\thcos{\bR_{i}}{\bR_{j}}=\pi/2$, we only have to show that if $\bR_{i}\bR_{j}=\bzero$ for $\forall i\neq j$ and $\Pindex{i}=\Pindex{j}$, then $\bOmega$ is diagonal, i.e., $\vecele{\bOmega}{i,j}=0$ for $\forall i\neq j$ and $\Pindex{i}=\Pindex{j}$.

For $\forall i\neq j$ and $\Pindex{i}=\Pindex{j}$, if $\bR_{i}\bR_{j}=\bzero$, then
\begin{align}\label{eq:oij}
  \vecele{\bOmega}{i,j}&=\tr{\invcovpi{i}\bR_{i}\inv{\left(\uptneff{\bR}\right)}\bR_{j}}\ntb
  &\equaa\tr{\bR_{i}\inv{\left(\bR_{i}+\invtaurhoup\bI\right)}\inv{\left(\uptneff{\bR}\right)}\bR_{j}}\ntb
  &=\tr{\bR_{j}\bR_{i}\inv{\left(\bR_{i}+\invtaurhoup\bI\right)}\inv{\left(\uptneff{\bR}\right)}}
  =0
\end{align}
where (a) follows from \eqref{eq:briinvrpiup}.

Furthermore, if $\bR_{i}\bR_{j}=\bzero$ for $\forall i\neq j$ and $\Pindex{i}=\Pindex{j}$, then diagonal elements of $\bOmega$ reduces to
\begin{align}\label{eq:omegaii}
  &\vecele{\bOmega}{i,i}
  =\mathrm{tr}\Bigg\{\inv{\left(\bR_{i}+\invtaurhoup\bI\right)}\bR_{i}\ntb
  &\quad\cdot\inv{\left[\sum_{k=1}^{K}\left(\bR_{k}-\bR_{k}\inv{\left(\bR_{k}+\invtaurhoup\bI\right)}\bR_{k}\right)+\invrhot\bI\right]}
  \bR_{i}\Bigg\}
\end{align}
via invoking \eqref{eq:briinvrpiup}. This concludes the proof.

\end{document}